\tikzset{>={Latex[width=2.5mm,length=2.5mm]}}
\tikzstyle{block}=[draw opacity=0.7,line width=1.4cm]
\title{\LARGE \bf
A sub-modular receding horizon solution for\\ mobile multi-agent persistent monitoring}
\author{Navid Rezazadeh and Solmaz S. Kia
  \thanks{This work is supported by NSF award IIS-SAS-1724331. A preliminary version of this work will appear in the proceeding of the 8th IFAC Workshop on Distributed Estimation and Control in Networked Systems~\cite{NR-SSK:19}.} %
}
\newcommand{\VV}{\mathcal{V}}
\newcommand{\real}{{\mathbb{R}}} 
\newcommand{\realpositive}{{\mathbb{R}}_{>0}}
\newcommand{\realnonnegative}{{\mathbb{R}}_{\ge 0}}
\newcommand{\argmax}{\operatorname{argmax}}
\newcommand{\vect}[1]{\boldsymbol{\mathbf{#1}}}
\newcommand{\vectsf}[1]{\vect{\mathsf{#1}}}
\newcommand{\SUM}[2]{\sum_{#1}^{#2}}
 \newcommand{\boxend}{\hfill \ensuremath{\Box}}
\newtheorem{thm}{Theorem}[section]
\newtheorem{cor}{Corollary}[section]
\newtheorem{lem}{Lemma}[section]
\newtheorem{assump}{Assumption}
\renewcommand*{\@opargbegintheorem}[3]{\trivlist
      \item[\hskip \labelsep{\bfseries #1\ #2}] \textbf{(#3)}\ \itshape}
\newcommand{\oprocendsymbol}{\hbox{$\bullet$}}
\newcommand{\oprocend}{\relax\ifmmode\else\unskip\hfill\fi\oprocendsymbol}
\begin{document}
\maketitle
\thispagestyle{empty}
\pagestyle{empty}

\begin{abstract}                          
We consider persistent monitoring of a finite number of inter-connected geographical nodes by a group of~heterogeneous mobile agents.  We assign to each geographical node a concave and increasing reward function that resets to zero after an agent's visit. Then, we design the optimal dispatch policy of which nodes to visit at what time and by what agent by finding a policy set that maximizes a utility that is defined as the total reward collected at visit times. We show that this optimization problem is NP-hard and its computational complexity increases exponentially with the number of the agents and the length of the mission horizon. By showing that the utility function is a monotone increasing and submodular set function of agents’ policy, we propose a suboptimal dispatch policy design with a known optimality gap. To reduce the time complexity of constructing the feasible search set and also to induce robustness to changes in the operational factors, we perform our suboptimal policy design in a receding horizon fashion. Then, to compensate for the shortsightedness of the receding horizon approach we add a new term to our utility, which provides a measure of nodal importance beyond the receding horizon.
This term gives the policy design an intuition to steer the agents towards the nodes with higher rewards on the patrolling graph. Finally, we discuss how our proposed algorithm can be implemented in a decentralized manner. A simulation study demonstrates our~results.
\end{abstract}

\section{Introduction}
\vspace{-0.03in}
In recent years, coordinating the movement of mobile sensors to cover areas that have not been adequately sampled/observed has been explored in controls, wireless sensors and robotic communities with problems related to coverage, exploration, and deployment. Many of the proposed algorithms strive to spread sensors to desired positions to obtain a stationary configuration such that the coverage is optimized, see e.g.,~\cite{JC-SM-TK-FB:04,AK-CG:07,AK-AS-CG:08,MS-DR-JJS:09,FB-RC-PF-:12,AC-MT-RC-LS-GP:15,MT-AC-RC-GP-LS:17,YC-SSK:20arxiv}. Some sensor placement problems such~\cite{MS-DR-JJS:09,AC-MT-RC-LS-GP:15,MT-AC-RC-GP-LS:17,YC-SSK:20arxiv} are context-aware, and include also a period of exploration and observation to increase the knowledge used to find the optimal residing position of the sensors.
In this paper, instead of aiming to achieve an improved stationary network configuration as the end result of the sensors' movement, our objective is to explore context-aware mobility strategies that dynamically reposition the mobile sensors to maximize their utilization and contribution over a mission horizon. Motivating applications include persistent monitoring to discover forest fires~\cite{yuan2015survey} or oil spillage in its early stages~\cite{henry2015wireless}, locating endangered animals in a large habitat~\cite{engler2004improved} and event detection in urban environments~\cite{TT-ECVB:90}. Specifically, we consider a persistent monitoring of a set of finite $\mathcal{V}$ inter-connected geographical nodes 
via a set of finite $\mathcal{A}$  mobile sensors/agents, where $|\mathcal{V}|>|\mathcal{A}|$. The mobile agents are confined to a set of pre-specified edges $\mathcal{E}\! 
\subset \!\mathcal{V} \times \mathcal{V}$, e.g., aerial or ground corridors, to traverse from one node to another, see Fig.~\ref{fig:city_partition}. 
Depending on their vehicle type, agents may have to take different edges to go from one node to another. Also, they may have different travel times along the same edge. 
We study dispatch policy that orchestrates the topological distribution of the mobile agents such that an optimized service for a global monitoring task is provided with a reasonable computational~cost. To quantify the service objective we assign to each node $v\in\VV$ the reward function, 
\begin{align}\label{eq::Ri}
    R_v(t)=\begin{cases}0,& t=\bar{t}_v,\\
    \psi_v(t-\bar{t}_v), &t>\bar{t}_v,\end{cases}
\end{align}
where $\psi_v(t)$ is a  nonnegative concave and increasing function of time and $\bar{t}_v$ is the latest time node $v$ is visited by an agent.  For example, in  data harvesting or health monitoring, $\psi_v(.)$ can be the weighted idle time of the node $v$ or in event detection, it can be the probability of at least one event taking place at inter-visit times. Optimal patrolling designs a dispatch policy (what sequence of nodes to visit at what times by which agents) to score the maximum collective reward for the team over the mission horizon. However, as we explain below, this problem is NP-hard. Our aim then is to design a suboptimal solution that has polynomial time complexity.

\begin{figure}
\centering
    {\includegraphics[width=.44\textwidth]{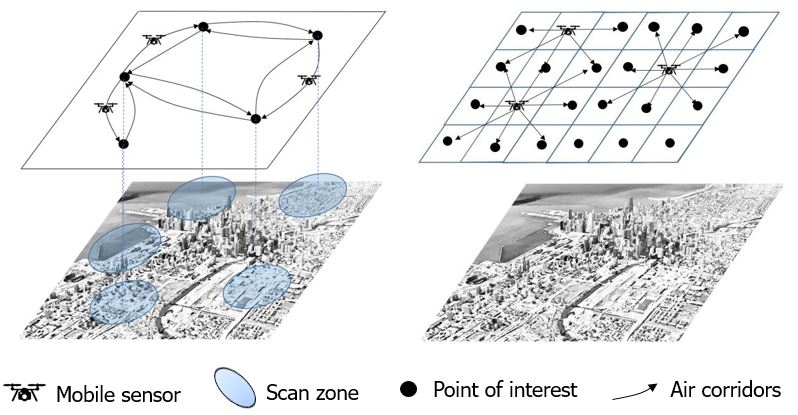}}\vspace{-0.1in}
    \caption{{\small Examples of a set of geographical nodes of interest and the edges between them. Finite number of nodes to monitor in a city can be restricted to some particular scanning zones (the picture on the left) or the cell partitioned map of the city (the picture on the right).}}
    \label{fig:city_partition}%
\end{figure}

\emph{Related work}: Dispatch policy design for patrolling/monitoring of geographical nodes can be divided into two categories: the edges to travel between the nodes are not specified (design in continuous edge space) or otherwise (design in discrete edge space). When there are no prespecified inter-node edges, the optimal patrolling policy design includes also finding the optimal inter-node trajectories that the agents should follow without violating their mobility limits. 
In some applications, however, the mobile agents are confined to travel through pre-specified known edges between the nodes. For example, in a smart city setting, regulations can restrict the admissible routes between the geographical nodes. 
In the dispatch policy design in discrete edge space, the complexity of finding the optimal policy for a single patrolling agent is the same as the complexity of solving the Traveling Salesman problem, where the computational complexity grows exponentially with the number of the nodes~\cite{RMK:72}. In case of multiple patrolling agents, the problem is even more complex, since each agent's policy design depends on the other agents' policy. This problem is formalized in earlier studies such as~\cite{machado2002multi,almeida2004recent}. Generally, when there are multiple edges to travel between every two nodes or when each node is connected to multiple other nodes, finding an optimal long term patrolling scheme is not tractable. Constraining the agents to travel through specific edges to traverse among the geographical nodes allows seeking optimal solutions for the problem. For example, when the connection topology between the geographical nodes is a path or a cyclic graph, optimal solutions for the problem are proposed in~\cite{chevaleyre2004theoretical,pasqualetti2012cooperative,yu2015persistent,donahue2016persistent}. To overcome the complexity issue on generic graphs,~\cite{ABA-SLS-SS:19} explores forming different cycles in the graph and assigning agents to these cycles to patrol the nodes periodically and seeks to minimize the time that a node stays un-visited. Alternatively,~\cite{farinelli2017distributed} proposes agents to move to the most rewarding neighboring node based on their current location.

\emph{Statement of contribution}: 
\begin{figure}
\centering
 {\includegraphics[width=.34\textwidth]{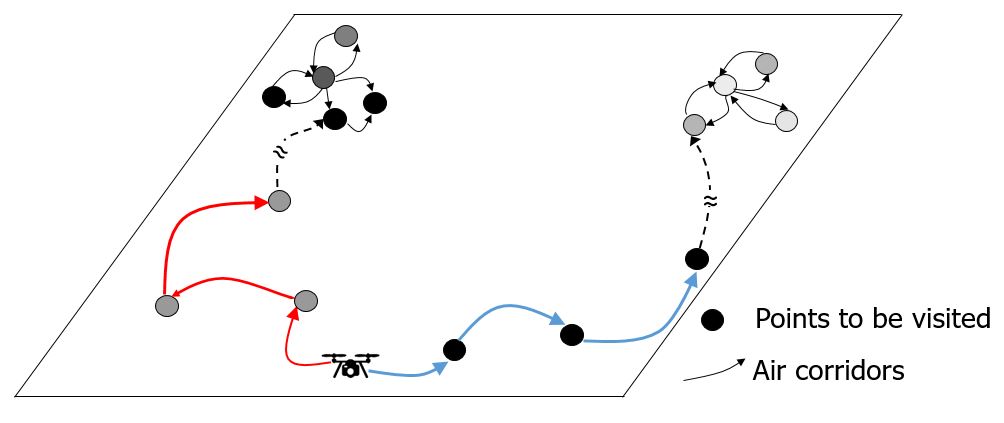}}\vspace{-0.1in}
    \caption{{\small An agent has two possible routes to take over the designated receding horizon. The nodes' color intensity shows their reward value.  The blue route offers a higher reward over the receding horizon but it puts the agent close to an area with a lower amount of reward, while the red route results in lower total reward over the receding horizon but puts the agent near an area with higher amount of reward.}}%
    \label{fig::nodal_importance}%
\end{figure}
In this paper, we propose a robust and suboptimal solution to the long term patrolling problem that we stated earlier. Instead of using the customary idle time, $\psi_v(t)=t$, as a reward function, which reduces the optimal dispatch policy design 
to the minimum latency problem~\cite{AB-PC-DC-BP-PR-MS:94}, we consider reward functions described by an increasing concave function. This allows modeling a wider class of patrolling problems such as patrolling for event detection. We let the utility function to be the sum of the rewards collected over the mission horizon by the mobile agents. We discuss that the design of optimal patrolling policy to maximize this utility over the mission horizon is an NP-hard problem. Specifically, we show that the complexity of finding the optimal policy increases exponentially with the mission horizon and number of agents. Next, we show that the utility function is a monotone increasing and submodular set function. To establish this result, we develop a set of auxiliary lemmas, presented in the appendix, based on the Karamata's inequality~\cite{ZK-DD-ML-IM:05}. 
Given the submodularity of the utility function, we propose a receding horizon sequential greedy algorithm to compute a suboptimal dispatch policy with a polynomial computation cost and guaranteed bound on optimality. The receding horizon nature of our solution induces robustness to uncertainties of the environment. Our next contribution is to add a new term to our utility function to compensate for the shortsightedness of the receding horizon approach, see Fig.~\ref{fig::nodal_importance}. When agents patrol a large set of inter-connected nodes, this added term becomes useful by giving them an intuition of the existing reward in the farther nodes. In recent years, submodular optimization has been widely used in sensor and actuator placement problems~\cite{AK-AS-CG:08,THS-FLC-JL:16,ZL-AC-PL-LB-DK-RP:18,AC-PL-BA-LB-RP:18,AK-CG:07,AC-LB-RP:14}. In comparison to the sensor/actuator placement problems, the challenge in our work is that the assigned policy per each mobile agent over the receding horizon is a dynamic scheduling problem rather than a static sensor placement. To deal with this challenge, we use the matroid constraint~\cite{MLF-GLN-LAW:78} approach to design our suboptimal submodular-based policy.  Finally, we discuss how our algorithm can be implemented in a decentralized manner. A simulation study demonstrates our results. Our notation is standard, though to avoid confusion, certain concepts and notation are defined as the need~arises. This paper extends our preliminary work~\cite{NR-SSK:19} in detailed technical treatment including all the proofs, introducing the notion of local importance to compensate for the shortsightedness of receding horizon approach, decentralized implementation of our algorithm, and a new simulation study. Also, we consider a more generalized case of reward functions.

\section{Problem Formulation}\label{sec::prob_def}
To formalize our objective, we first introduce our notations and state our standing assumptions. For any node $v\in\mathcal{V}$, $\mathcal{N}_v$ is a set consisting node $v$ and all the  neighboring nodes that are connected to node $v$ via an edge in $\mathcal{E}$. If there exists a path connecting node $v\in\mathcal{V}$ to node $w\in\mathcal{V}$, we let $\tau^i_{v,w}\in\real_{>0}$ be the \emph{shortest} travel time of agent $i\in\mathcal{A}$ from node $v$ to $w$. 


\begin{assump}\label{assm::scan}
 Upon arrival of any agent $i\in\mathcal{A}$ at any time $\bar{t}\in\real_{>0}$ at node $v\in\mathcal{V}$, the agent immediately scans the node and the reward $R_{v}(\bar{t})$ is scored for the patrolling team $\mathcal{A}$ and $\bar{t}_v$ of node $v$ in~\eqref{eq::Ri} is set to $\bar{t}$. If more than one agent arrives at node $v\in\mathcal{V}$ and scans it at the same time $\bar{t}$, the reward collected for the team is still $R_v(\bar{t})$. If an agent $i\in\mathcal{A}$ needs to linger over each node for  $\delta^i\in\real_{\geq 0}$ amount of time to complete its scan, during this time the agent cannot scan the node again to score a reward for the team.
\end{assump}

Let the tuple ${\mathsf{p}}=(\vectsf{V}_{\mathsf{p}},\vectsf{T}_{\mathsf{p}},\mathsf{a}_{\mathsf{p}})$ be a dispatch policy of agent $\mathsf{a}_{\mathsf{p}}\in\mathcal{A}$ over the given mission time horizon, where $\vectsf{V}_p$ and $\vectsf{T}_{\mathsf{p}}$ are the vectors that specify the nodes and the corresponding visit times assigned to agent $\mathsf{a}_{\mathsf{p}}$. Moreover, we let $\mathsf{n}_{\mathsf{p}}$ be the total number of nodes visited by agent $\mathsf{a}_{\mathsf{p}}$, i.e., $\mathsf{n}_{\mathsf{p}}=\text{dim}(\vectsf{V}_{\mathsf{p}})$. We refer to $\mathsf{n}_{\mathsf{p}}$ as the \emph{length} of the policy $\mathsf{p}$. 
We refer to $(\vectsf{V}_{\mathsf{p}}(l),\vectsf{T}_{\mathsf{p}}(l))$, $l\in \{1,2,\cdots,\mathsf{n}_{\mathsf{p}}\}$, as the $l^{\text{th}}$ \emph{step} of policy ${\mathsf{p}}$. Furthermore, for any agent $i\in\mathcal{A}$, we let $\mathcal{P}^i$ be the set of all the admissible policies $\mathsf{p}$ over the mission horizon such that $\mathsf{a}_{\mathsf{p}}=i$.
\begin{assump}\label{as::travel} 
For any policy $\mathsf{p}$, we have $\vectsf{V}_{\mathsf{p}}(l+1) \in \mathcal{N}_{\vectsf{V}_{\mathsf{p}}(l)}$,  for all $l\in \{1,2,\cdots,\mathsf{n}_{\mathsf{p}}-1\}$.
\end{assump}
We let  $\mathcal{P}=\bigcup_{i\in\mathcal{A}}^{}\mathcal{P}^i$. Then, given any
 $\bar{\mathcal{P}} \subset \mathcal{P}$, the utility function $\mathsf{R}:2^{\mathcal{P}} \to \realpositive$  is
 $\bar{\mathcal{P}} \subset \mathcal{P}$, the utility function $\mathsf{R}:2^{\mathcal{P}} \to \realpositive$  is
\begin{align}\label{eq::reward_global_op}
    \mathsf{R}(\bar{\mathcal{P}})=\sum\nolimits_{\forall \mathsf{p}\in\bar{\mathcal{P}}}\,\sum\nolimits_{l=1}^{\mathsf{n}_{\mathsf{p}}} R_{\vectsf{V}_{\mathsf{p}}(l)}(\vectsf{T}_{\mathsf{p}}(l)).
\end{align}
Given~\eqref{eq::reward_global_op}, the optimal policy to maximize the utility over a given mission horizon is given by 
\begin{subequations}\label{eq::rewardOpt_m}
\begin{align}
\label{eq::rewardOpt}
    \mathcal{P}^{\star}&=\underset{\bar{\mathcal{P}} \subset \mathcal{P}}{\text{argmax}}\,\mathsf{R}(\bar{\mathcal{P}}),\quad \text{s.t.} \\
\qquad \quad &|\bar{\mathcal{P}} \cap \mathcal{P}^i| \leq 1~~\qquad i\in \mathcal{A},\label{eq::rewardOpt-const}
\end{align}
\end{subequations}
where $|\,.\,|$ returns the cardinality of a set. The constraint condition~\eqref{eq::rewardOpt-const} is in the so-called \emph{partition matroid} form~\cite{MLF-GLN-LAW:78} and restricts the choice of the optimal solution to be a set that contains of at most one member from each disjoint sets $\mathcal{P}^i, \,\, i \in \mathcal{A}$.
A set value optimization problem of the form~\eqref{eq::rewardOpt_m} is known to be NP-hard~\cite{LL:83}. Lemma~\ref{lem::complexity} below, whose proof is given in the appendix, gives the cost of constructing the feasible set $\mathcal{P}$ and time complexity of solving optimization problem~\eqref{eq::rewardOpt_m}.

\begin{lem}[Time complexity of problem~\eqref{eq::rewardOpt}]\label{lem::complexity}
The cost of constructing the feasible set $\mathcal{P}$ of optimization problem~\eqref{eq::rewardOpt} is of order $O(\sum_{i\in \mathcal{A}}^{} \mathsf{D}^{\bar{\mathsf{n}}^i})$,
where 
$\mathsf{D}=\max_{v \in \mathcal{V}}(|\mathcal{N}_v|)$ and
$\bar{\mathsf{n}}^i=\max\{\mathsf{n}_{\mathsf{p}}\}_{\forall\mathsf{p}\in\mathcal{P}^i}$. Furthermore,
the time complexity of solving optimization problem~\eqref{eq::rewardOpt} is $O(\prod_{i\in \mathcal{A}}^{} \mathsf{D}^{\bar{\mathsf{n}}^i})$.
\end{lem}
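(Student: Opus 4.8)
The plan is to prove both bounds by a direct counting argument over the tree of admissible policies, with no analytic machinery required.

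First I would bound the size of each $\mathcal{P}^i$. By Assumption~\ref{as::travel} an admissible policy is a walk on the patrolling graph in which each node $\vectsf{V}_{\mathsf{p}}(l+1)$ must lie in the neighborhood $\mathcal{N}_{\vectsf{V}_{\mathsf{p}}(l)}$ of its predecessor. Starting from the agent's fixed initial node, the number of admissible choices for each successive node is therefore at most $\mathsf{D}=\max_{v}|\mathcal{N}_v|$. Hence the admissible policies of a fixed length $n$ for agent $i$ are the leaves of a tree of branching factor at most $\mathsf{D}$ and depth $n-1$, so there are at most $\mathsf{D}^{n-1}$ of them. Summing over all lengths up to $\bar{\mathsf{n}}^i$ and collapsing the geometric series,
\begin{align*}
|\mathcal{P}^i| \le \sum_{n=1}^{\bar{\mathsf{n}}^i}\mathsf{D}^{n-1}=\frac{\mathsf{D}^{\bar{\mathsf{n}}^i}-1}{\mathsf{D}-1}=O\!\left(\mathsf{D}^{\bar{\mathsf{n}}^i}\right).
\end{align*}
Since writing down each policy costs a constant per node, the total cost of constructing $\mathcal{P}=\bigcup_{i\in\mathcal{A}}\mathcal{P}^i$ is $O\!\left(\sum_{i\in\mathcal{A}}\mathsf{D}^{\bar{\mathsf{n}}^i}\right)$, which is the first claim.

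For the second bound I would count the feasible solutions of~\eqref{eq::rewardOpt_m}. The partition-matroid constraint~\eqref{eq::rewardOpt-const} forces any feasible $\bar{\mathcal{P}}$ to contain at most one policy from each $\mathcal{P}^i$; equivalently, for each agent $i$ we independently pick either one of its $|\mathcal{P}^i|$ policies or none, so the number of feasible combinations is $\prod_{i\in\mathcal{A}}\bigl(|\mathcal{P}^i|+1\bigr)$. A brute-force optimizer must evaluate the additively separable utility~\eqref{eq::reward_global_op} on each such set, and using the per-agent bound $|\mathcal{P}^i|+1=O(\mathsf{D}^{\bar{\mathsf{n}}^i})$ from the first part, the dominant cost is $\prod_{i\in\mathcal{A}}O(\mathsf{D}^{\bar{\mathsf{n}}^i})=O\!\left(\prod_{i\in\mathcal{A}}\mathsf{D}^{\bar{\mathsf{n}}^i}\right)$, giving the second claim.

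I expect the only care points to be bookkeeping rather than a genuine obstacle. First, the branching factor is $\mathsf{D}$ and not $\mathsf{D}-1$ precisely because $\mathcal{N}_v$ contains $v$ itself, so lingering in place is one of the admissible moves and must be counted. Second, I must verify that the geometric-series collapse in the first step and the $+1$ in the product of the second step both leave the stated $O(\cdot)$ order unchanged, and that the $O(\bar{\mathsf{n}}^i)$ per-set evaluation of $\mathsf{R}$ is a polynomial factor dominated by the exponential term. Finally, the agent's starting node being fixed by its current position is what prevents the first factor from contributing an extra $|\mathcal{V}|$; were the initial node free instead, the bound would only pick up a benign constant factor that does not affect the stated order.
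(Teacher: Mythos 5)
Your counting argument is essentially the paper's own proof, with the bookkeeping (branching factor $\mathsf{D}$ from Assumption~\ref{as::travel}, the geometric series over policy lengths, the $+1$ for the empty choice per agent) made more explicit; both stated bounds come out the same way. The one place you go astray is in describing the utility~\eqref{eq::reward_global_op} as ``additively separable.'' Although it is written as a sum over policies, the summands are coupled: $R_v(t)$ in~\eqref{eq::Ri} depends on $\bar{t}_v$, the last time \emph{any} agent visited node $v$ (Assumption~\ref{assm::scan}), so the reward collected by one agent's policy changes with the other agents' policies. If the utility really were separable across agents, each $\mathcal{P}^i$ could be optimized independently and the solving cost would be $O(\sum_{i\in\mathcal{A}}\mathsf{D}^{\bar{\mathsf{n}}^i})$ rather than the product. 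The paper's proof leans on exactly this coupling --- the reward of $\mathsf{p}$ cannot be evaluated independently of $\bar{\mathcal{P}}\setminus\{\mathsf{p}\}$ --- as the justification for enumerating all $\prod_{i\in\mathcal{A}}\bigl(|\mathcal{P}^i|+1\bigr)$ feasible combinations. Your conclusion and the counting are correct; you should just replace the separability remark with the coupling argument, since it is the reason the second bound is multiplicative rather than additive.
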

If the system parameters, such as number of the mobile agents or the nodes, or the parameters of $\psi_v(.)$ of the reward function at any node $v$, change after the optimal policy design, the optimization problem~\eqref{eq::rewardOpt_m} should be solved again over the remainder of the mission horizon under the new conditions. Our objective in this paper is to construct a suboptimal solution to solve the persistent monitoring problem given by~\eqref{eq::rewardOpt_m} with polynomial time complexity. Moreover, we seek a solution that has intrinsic robustness to changes that can happen during the mission horizon.

We close this section by introducing some definitions and notations used subsequently.  For any set function $g:\,2^{\mathcal{Q}} \to \mathbb{R}$, we let 
$$\Delta_g (\mathsf{q}|\bar{\mathcal{Q}}) \!=\!g(\bar{\mathcal{Q}} \cup \mathsf{q} )-g(\bar{\mathcal{Q}} ),$$ for $\forall \bar{\mathcal{Q}} \in 2^\mathcal{Q}$ and $\forall \mathsf{q} \in \mathcal{Q}$, where $\Delta_g$ shows the increase in value of the set function $g$ going from set $\bar{\mathcal{Q}}$ to $\bar{\mathcal{Q}} \cup \mathsf{q}$. Recall that $g:\,2^{\mathcal{Q}} \to \mathbb{R}$ is \emph{submodular} if and only if for two sets $\mathcal{Q}_1$ and $\mathcal{Q}_2$ satisfying $\mathcal{Q}_1\subset \mathcal{Q}_2 \subset \mathcal{Q}$, and for $\mathsf{q} \not\in \mathcal{Q}_2$ we have~\cite{MLF-GLN-LAW:78}
$$\Delta_g (\mathsf{q}|\bar{\mathcal{Q}}_1) \geq \Delta_g (\mathsf{q}|\bar{\mathcal{Q}}_2).$$ Then submodularity is a property of set functions that shows diminishing reward as new members are being introduced to the system. We say $g:2^\mathcal{Q}\to\real$ is \emph{monotone increasing} if for all $\mathcal{Q}_1,\mathcal{Q}_2 \subset \mathcal{Q}$ we have 
    $\mathcal{Q}_1 \subset \mathcal{Q}_2$ if and only if~\cite{MLF-GLN-LAW:78} $$g(\mathcal{Q}_1)\leq g(\mathcal{Q}_2).$$ We denote a sequence of $m$ real numbers $(\mathfrak{t}_1,\cdots,\mathfrak{t}_m)$ by $(\mathfrak{t})_1^m$.
Given two increasing (resp. decreasing) sequences $(\mathfrak{t})_1^{n}$ and $(\mathfrak{v})_1^{m}$,  $(\mathfrak{t})_1^{n}\oplus(\mathfrak{v})_1^{m}$ is their concatenated increasing (resp. decreasing) sequence, i.e., for $(\mathfrak{u})_1^{n+m}=(\mathfrak{t})_1^{n}\oplus(\mathfrak{v})_1^{m}$, any $\mathfrak{u}_k$, $k\in\{1,\cdots,n+m\}$ is either in $(\mathfrak{t})_1^{n}$ or $(\mathfrak{v})_1^{m}$ or is in both. 
We assume that $(\mathfrak{u})_1^{n+m}$ preserves the relative labeling of $(\mathfrak{t})_1^{n}$ or $(\mathfrak{v})_1^{m}$, i.e., if $\mathfrak{t}_k$ and $\mathfrak{t}_{k+1}$, $k\in\{1,\cdots,n-1\}$ (resp. $\mathfrak{v}_k$ and $\mathfrak{v}_{k+1}$, $k\in\{1,\cdots,m-1\}$) correspond to $\mathfrak{u}_i$ and $\mathfrak{u}_j$ in $(\mathfrak{u})_1^{n+m}$, then $i<j$.

\section{Suboptimal policy design }\label{sec::submadul_policy}
According to Lemma~\ref{lem::complexity} the time complexity of finding an optimal patrolling policy in~\eqref{eq::rewardOpt} increases exponentially by the maximum length, $\bar{\mathsf{n}}^i$, of the admissible policies of any agent $i\in\mathcal{A}$ and also by the number of the exploring agents $M$. In light of this observation, to reduce the computational cost, we propose the following suboptimal policy design. Since the maximum policy length $\bar{\mathsf{n}}^i$ is proportional to the length of the mission horizon, we first propose to trade in optimality and divide the planning horizon into multiple shorter horizons so that the policy design can be carried out in a consecutive manner over these shorter horizons. Then, to reduce the optimality gap and also to induce robustness to the online changes that can occur during the mission time, we propose to implement this approach in a receding horizon fashion where we calculate the policy over a specified shorter horizon but execute only some of the initial steps of the policy, and then we repeat the process. However, a receding horizon approach suffers from what we refer to as 
\emph{shortsightedness}. That is, over large inter-connected geographical node sets, a receding horizon design is oblivious to the reward distribution of the nodes that are not in the feasible policy set in the planning horizon. Then, the optimal policy over the planning horizon can inadvertently steer the agents away from the distant nodes with a higher reward, see Fig.~\ref{fig::nodal_importance}. To compensate for this shortcoming, we introduce the notion of \emph{nodal importance} and augment the reward function~\eqref{eq::reward_global_op} over the design horizon with an additional term that given an admissible policy, provides a measure of how close an agent at the final step of the policy is to a cluster of geographical nodes with a high concentration of~reward. 

Let the augmented reward, whose exact form will be introduced below,  over the planning horizon be $\bar{\mathsf{R}}$. Then, the optimal policy design over each receding horizon is 
\begin{align}\label{eq::rewardOpt_nodal}
    \mathcal{P}^{\star}&=\underset{\bar{\mathcal{P}} \subset \mathcal{P}}{\text{argmax}}\,\bar{\mathsf{R}}(\bar{\mathcal{P}}),\quad \text{s.t.}\quad |\bar{\mathcal{P}} \cap \mathcal{P}^i| \leq 1, ~~i\in \mathcal{A}
\end{align}
where hereafter $\mathcal{P}=\bigcup_{i \in \mathcal{A}}^{}\mathcal{P}^i$ is the set of the union of the admissible policies of the agents $\mathcal{P}^i$, $i\in\mathcal{A}$, over the planning horizon. Hereafter, we let $\bar{\mathfrak{t}}^v_0$ be the last time node $v\in\mathcal{V}$ was visited before a planning horizon starts.

Next, to reduce the computational burden further, we propose to use Algorithm~\ref{alg:the_alg}, which is a  sequential greedy algorithm with a polynomial cost in terms of the number of the agents to obtain a suboptimal solution for~\eqref{eq::rewardOpt_nodal}.  In what follows, we show that since the objective function~\eqref{eq::rewardOpt_nodal} is a submodular set function, Algorithm~\ref{alg:the_alg} comes with a known optimality gap. We also show that with a proper inter-agent communication coordination Algorithm~\ref{alg:the_alg} can be implemented in a decentralized manner.

\begin{algorithm}[t]
\caption{Sequential Greedy  Algorithm}
\label{alg:the_alg}
{
\begin{algorithmic}[1]
\Procedure{\sf{SGOpt}}{$\mathcal{P}^i,\, i\in \mathcal{A}$}
\State $\mathbf{\text{Init:}} ~\bar{\mathcal{P}} \gets \emptyset, \,\, i\gets 0$, $\{\bar{\mathfrak{t}}^0_v\}_{v\in\mathcal{V}}$
\For{$i \in \mathcal{A}$}
\State $\mathsf{p}^{i\star}=\underset{\mathsf{p}\subset \mathcal{P}^i}{\argmax} \quad \Delta_{\bar{\mathsf{R}}} (\mathsf{p}|\bar{\mathcal{P}})$.
\State $\bar{\mathcal{P}} \gets\bar{\mathcal{P}}\cup \mathsf{p}^{i\star}$.
\EndFor
\State \textbf{Return} $\bar{\mathcal{P}}$.
\EndProcedure
\end{algorithmic}}
\end{algorithm}

For $v\in\mathcal{V}$, let $\mathcal{N}_v^r$ be the set consisted of node $v$ itself and its $r$-hope neighbors. This set can be computed using the Breadth-first search in time ${O}(|\mathcal{E}|+|\mathcal{V}|)$~\cite{THT-CEL-RLR-CS:09}. 
Here, $\tau^i_{w,v}$ can be computed via $A^\star$ algorithm in time $O(|\mathcal{E}|)$~\cite{FD-AB-MK-PB-MF-TF-LJ:14}. Then, for every node $v\in\VV$, we define the nodal importance with radius $r$ at time $\tau$ as $L(v,\tau,r)=\sum\nolimits_{w \in \mathcal{N}_v^r}{ }R_w(\tau)$. 
Next, given an agent $i\in\mathcal{A}$ that is at node $w\in\mathcal{V}$ at time $\hat{t}\in\real_{\geq0}$, we define the \emph{relative nodal importance} of a node $v\in\mathcal{V}$ with respect to agent $i$ as
$$
\mathsf{L}(v,w,\hat{t},i) ={L(v,\hat{t}+\tau^i_{w,v},r)}\big\slash{\tau^i_{w,v}}.$$
Then, $\mathsf{L}(v,\vectsf{V}_{\mathsf{p}}(\mathsf{n}_\mathsf{p}),\vectsf{T}_{\mathsf{p}}(\mathsf{n}_\mathsf{p}),\mathsf{a}_{\mathsf{p}})$ is a measure of the relative size of the awards concentration around any node $v\in\mathcal{V}$ that takes into account also the travel time of agent $\mathsf{a}_{\mathsf{p}}$ from the final step of policy ${\mathsf{p}}=(\vectsf{V}_{\mathsf{p}},\vectsf{T}_{\mathsf{p}},\mathsf{a}_{\mathsf{p}})\in\mathcal{P}$ to $v$. Let $\mathsf{L}(v,\mathsf{p})$ be the shorthand notation for  $\mathsf{L}(v,\vectsf{V}_{\mathsf{p}}(\mathsf{n}_\mathsf{p}),\vectsf{T}_{\mathsf{p}}(\mathsf{n}_\mathsf{p}),\mathsf{a}_{\mathsf{p}})$. To compensate for the shortsightedness of the receding horizon design, then we revise the utility function to 
\begin{align}\label{eq::reward_global_op_NI2}
    \bar{\mathsf{R}}(\bar{\mathcal{P}})= {\mathsf{R}}(\bar{\mathcal{P}})
    + \alpha \sum\nolimits_{\forall \mathsf{p}\in\bar{\mathcal{P}}}{}\underset{\forall v\in \bar{\mathcal{V}}}{\text{max }}\mathsf{L}(v,\mathsf{p}),~~~ \alpha\in\real_{\geq0}.
\end{align}
The weighting factor $\alpha\in\real_{\geq0}$ defines how much significance we want to assign to the distribution of the reward beyond the receding horizon. We should note that using a large $\alpha$ can gravitate the agents to move towards the nodes close to the anchor nodes, and make them oblivious to the rest of the nodes. For computational efficiency, instead of incorporating the relative nodal importance of all the nodes, which can be achieved by setting $\bar{\mathcal{V}}$ equal to $\mathcal{V}$, we propose to use only $\bar{\mathcal{V}}$ subset of the nodes. We refer to nodes in $\bar{\mathcal{V}}$ as \emph{anchor nodes}. The anchor nodes can be selected to be the nodes with higher reward return or to be a set of nodes that are scattered uniformly on the graph. It is interesting to note that the relative nodal importance term in~\eqref{eq::reward_global_op_NI2} is a reminiscent of terminal cost used in the model predictive control (MPC). In MPC, terminal cost that is used to achieve an infinite horizon control with closed-loop stability guarantees~\cite{ECG-DMP-MM:89} in some way also compensates for the shortsightedness of the design over finite planning horizon. Next, we show that the reward function~\eqref{eq::reward_global_op_NI2} is submodular over any given feasible policy set $\mathcal{P}$ in every planning horizon.  

\begin{thm}[Submodularity of the reward function~\eqref{eq::reward_global_op_NI2}]\label{thm::submod}
For any weighting factor $\alpha\in\real_{\geq0}$, the reward function $\bar{\mathsf{R}}:2^{\mathcal{P}} \to \realpositive$ in~\eqref{eq::reward_global_op_NI2}
is a monotone increasing and submodular set function over $\mathcal{P}$.
\end{thm}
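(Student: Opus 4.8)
The plan is to exploit the fact that the family of monotone increasing, submodular set functions is closed under nonnegative linear combinations; since $\alpha\geq 0$, it then suffices to establish both properties separately for the base utility $\mathsf{R}$ in~\eqref{eq::reward_global_op} and for the nodal-importance term $\Phi(\bar{\mathcal{P}})=\sum_{\mathsf{p}\in\bar{\mathcal{P}}}\max_{v\in\bar{\mathcal{V}}}\mathsf{L}(v,\mathsf{p})$ appearing in~\eqref{eq::reward_global_op_NI2}. The second term is the easy half: because $\mathsf{L}(v,\mathsf{p})$ depends only on the final step $(\vectsf{V}_{\mathsf{p}}(\mathsf{n}_\mathsf{p}),\vectsf{T}_{\mathsf{p}}(\mathsf{n}_\mathsf{p}))$ of the \emph{single} policy $\mathsf{p}$ together with the fixed reference times $\bar{\mathfrak{t}}^v_0$, and not on any other policy in $\bar{\mathcal{P}}$, each number $\phi(\mathsf{p})=\max_{v\in\bar{\mathcal{V}}}\mathsf{L}(v,\mathsf{p})\geq 0$ is determined by $\mathsf{p}$ alone. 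Hence $\Phi(\bar{\mathcal{P}})=\sum_{\mathsf{p}\in\bar{\mathcal{P}}}\phi(\mathsf{p})$ is modular, so $\Delta_{\Phi}(\mathsf{p}|\bar{\mathcal{P}}_1)=\phi(\mathsf{p})=\Delta_{\Phi}(\mathsf{p}|\bar{\mathcal{P}}_2)$ for every nested pair, yielding submodularity with equality, and $\phi\geq 0$ yields monotonicity.

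The real work is in $\mathsf{R}$. First I would re-index the double sum in~\eqref{eq::reward_global_op} by node rather than by policy: for any $\bar{\mathcal{P}}$ let $\mathcal{T}_v(\bar{\mathcal{P}})$ denote the sorted set of visit times that the policies in $\bar{\mathcal{P}}$ schedule at node $v$ (coincident arrivals, per Assumption~\ref{assm::scan}, merged through $\oplus$ into a single time), and write $\mathsf{R}(\bar{\mathcal{P}})=\sum_{v\in\mathcal{V}} f_v\big(\mathcal{T}_v(\bar{\mathcal{P}})\big)$, where $f_v(S)=\sum_k \psi_v(s_k-s_{k-1})$ sums $\psi_v$ over the consecutive gaps of $s_1<s_2<\cdots$ anchored at $s_0=\bar{\mathfrak{t}}^v_0$. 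It then suffices to prove that each per-node map $f_v$ is monotone increasing and submodular in its set of visit times, since the node-wise sum and the pullback $\bar{\mathcal{P}}\mapsto\mathcal{T}_v(\bar{\mathcal{P}})$ — which is monotone and sends a newly added policy to a fixed incremental set of times — both preserve these two properties.

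For monotonicity of $f_v$, inserting one new time $s$ into a gap $(a,b)$ replaces the term $\psi_v(b-a)$ by $\psi_v(s-a)+\psi_v(b-s)$; since $\psi_v$ is concave with $\psi_v(0)=0$ (the reward resets at each visit, cf.~\eqref{eq::Ri}) it is subadditive, so the replacement cannot decrease $f_v$, and appending a time past the last visit merely adds a nonnegative term. For submodularity, by the standard telescoping argument that lifts single-element diminishing returns to the multiple insertions a policy performs at a node, it suffices to show that the marginal gain of inserting one time $s$ shrinks as the visit set grows. Writing this gain as $g(s-a,\,b-s)$ with $g(x,y)=\psi_v(x)+\psi_v(y)-\psi_v(x+y)$, concavity of $\psi_v$ gives $\partial_x g=\psi_v'(x)-\psi_v'(x+y)\geq 0$ and likewise $\partial_y g\geq 0$, so $g$ is nondecreasing in each argument; since enlarging the visit set can only shrink the surrounding gap $(a,b)$ containing $s$, the marginal gain decreases, which is exactly the required $\Delta_{\mathsf{R}}(\mathsf{p}|\bar{\mathcal{P}}_1)\geq\Delta_{\mathsf{R}}(\mathsf{p}|\bar{\mathcal{P}}_2)$.

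I expect the main obstacle to be the bookkeeping that turns these clean one-gap statements into inequalities for arbitrary interleavings of visit times: when $\mathsf{p}$ contributes several visits to the same node and the two nested sets $\bar{\mathcal{P}}_1\subset\bar{\mathcal{P}}_2$ partition the horizon into differently refined gaps, the per-node reward becomes a sum of $\psi_v$ over two majorization-ordered partitions of the same interval. Controlling this comparison uniformly is precisely where the Karamata-inequality-based auxiliary lemmas promised in the appendix enter, guaranteeing that the finer partition (larger visit set) gives both a larger total and a smaller incremental gain. It remains to verify that the degenerate cases of Assumption~\ref{assm::scan} — coincident arrivals, contributing a zero gap $\psi_v(0)=0$, and the linger time $\delta^i$ — fold consistently into the $\oplus$ merge without disturbing the gap-shrinking monotonicity invoked above.
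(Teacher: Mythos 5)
Your proposal is correct and follows essentially the same route as the paper: the nodal-importance term is dismissed as nonnegative and modular, and $\mathsf{R}$ is re-indexed per node as a sum of $\psi_v$ over inter-visit gaps, with monotonicity coming from subadditivity of the normalized concave $\psi_v$ and submodularity from the fact that the one-gap marginal gain $\psi_v(x)+\psi_v(y)-\psi_v(x+y)$ is nondecreasing in each argument, telescoped over the insertions a new policy makes. The only cosmetic difference is that you justify that one-gap inequality by a derivative/concavity argument, whereas the paper's Corollary~\ref{lm::aux1} and Lemmas~\ref{lem::auxMain1}--\ref{lem::auxMain2} obtain it, together with the multi-insertion bookkeeping, from Karamata's majorization inequality.
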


\begin{proof}
Let  $\mathtt{c}(v,{\mathcal{Q}}):\mathcal{V} \times 2^{\mathcal{Q}}\to\mathbb{Z}_{>0}$ be the total number of visits to the geographical node $v$, and ${\mathcal{I}}_{{\mathcal{Q}}} \subset \mathcal{V}$ be the set of the nodes that are visited when a policy set ${\mathcal{Q}} \subset\mathcal{P}$ is implemented. Furthermore, let the increasing sequence 
$$(\mathfrak{t}^v({\mathcal{Q}}))_1^{c(v,{\mathcal{Q}})} = ( \mathfrak{t}^v_1({\mathcal{Q}}) , \mathfrak{t}^v_2({\mathcal{Q}}) , \cdots ,\mathfrak{t}^v_{c(v,{\mathcal{Q}})}({\mathcal{Q}}))$$
be the sequence of time that node $v \in {\mathcal{I}}_{{\mathcal{Q}}}$ was visited when agents implement ${\mathcal{Q}}$. Now consider the reward function $\bar{\mathsf{R}}$ in \eqref{eq::reward_global_op_NI2}. Then, the first summand of $\bar{\mathsf{R}}$ expands as  $${\mathsf{R}}(\bar{\mathcal{P}})= \sum\nolimits_{v\in {\mathcal{I}}_{\bar{\mathcal{P}}}}{}\big(\sum\nolimits_{j=1}^{c(v,\bar{\mathcal{P}})} \psi_v(\Delta \mathfrak{t}_j^v(\bar{\mathcal{P}}))\big)
    $$, where $\Delta \mathfrak{t}_j^v(\bar{\mathcal{P}})=\mathfrak{t}^v_j(\bar{\mathcal{P}})-\mathfrak{t}^v_{j-1}(\bar{\mathcal{P}})$ is the time between two consecutive visits of node $v$, and $\mathfrak{t}^v_0(\bar{\mathcal{P}})=\bar{\mathfrak{t}}^v_0$. 
Next, consider the monitoring policy sets $\mathcal{Q}_1,\,\mathcal{Q}_2$ and monitoring policy $\mathsf{q}$ with $\mathcal{Q}_1 \subset \mathcal{Q}_2 \subset \mathcal{P}$, $\mathsf{q} \in \mathcal{P}$, $\mathsf{q}\not\in \mathcal{Q}_1,$ and $\mathsf{q}\not\in \mathcal{Q}_2$.
Because $(\mathfrak{t}^v(\mathcal{Q}_1))_1^{c(v,\mathcal{Q}_1)}$ is a sub-sequence of $(\mathfrak{t}^v(\mathcal{Q}_2))_1^{c(v,\mathcal{Q}_2)}$, using Lemma~\ref{lem::auxMain1} and the fact that $\psi(.)_v$ is a normalized increasing concave function, we conclude that $$ \!\sum\nolimits_{j=1}^{c(v,\mathcal{Q}_1 \cup \mathsf{q})}\!\! \! \psi_v(\Delta (\mathfrak{t}_j^v(\mathcal{Q}_2\cup\mathsf{q}))) \!- \!\! \sum\nolimits_{j=1}^{c(v,\mathcal{Q}_1)}\!\!\!\!\psi_l(\Delta (\mathfrak{t}_j^v(\mathcal{Q}_2)))
   \!\geq 0$$ for $\forall v \in {\mathcal{I}}_{\bar{\mathcal{P}}}$.
Therefore, $\Delta_{\mathsf{R}}(p|\mathcal{Q}_1)\geq 0$ which shows that $\mathsf{R}(\bar{\mathcal{P}})$ is a monotone increasing set function. Furthermore, using Lemma~\ref{lem::auxMain2} we can write 
\begin{align*}
   \SUM{j=1}{c(v,\mathcal{Q}_2 \cup \mathsf{q})} \psi_v(\Delta (\mathfrak{t}_j^v(\mathcal{Q}_2\cup\mathsf{q}))) &-  \SUM{j=1}{c(v,\mathcal{Q}_2)}\psi_v(\Delta (\mathfrak{t}_j^v(\mathcal{Q}_2)))\\
   &\leq \\
   \SUM{j=1}{c(v,\mathcal{Q}_1 \cup \mathsf{q})} \psi_v(\Delta (\mathfrak{t}_j^v(\mathcal{Q}_1\cup\mathsf{q}))) &-  \SUM{j=1}{c(v,\mathcal{Q}_1)}\psi_v(\Delta (\mathfrak{t}_j^v(\mathcal{Q}_1))).
\end{align*}
Hence, $$\Delta_{\mathsf{R}}( \mathsf{q}|\mathcal{Q}_1)\geq\Delta_{\mathsf{R}}( \mathsf{q}|\mathcal{Q}_2)$$ which shows that ${\mathsf{R}}(\bar{\mathcal{P}})$ is a submodular set function. Then, since the second summand of $\bar{\mathsf{R}}$,  $\sum\nolimits_{\forall \mathsf{p}\in\bar{\mathcal{P}}}{}\underset{\forall l\in \bar{\mathcal{V}}}{\text{max }}\mathsf{L}(l,\mathsf{p})$, is trivially positive and modular, the proof is concluded.
\end{proof}

Due to Theorem~\ref{thm::submod}, the suboptimal dispatch policy of Algorithm~\ref{alg:the_alg}, which has a polynomial computational complexity, has the following well-defined optimality gap.
\begin{thm}[Optimality gap of Algorithm~\ref{alg:the_alg}]\label{thm::opt_gap_Alg1}
Let $\mathcal{P}^{\star}$ be an optimal solution of~\eqref{eq::rewardOpt_nodal} and  $\bar{\mathcal{P}}$ be the output of Algorithm~\ref{alg:the_alg}. Then, $\bar{\mathsf{R}}(\bar{\mathcal{P}}) \geq \frac{1}{2}\bar{\mathsf{R}}(\mathcal{P}^\star)$.
\end{thm}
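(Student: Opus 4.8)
The plan is to invoke the classical analysis of the sequential greedy algorithm for maximizing a monotone increasing, submodular set function subject to a single matroid (here, partition matroid) constraint, due to~\cite{MLF-GLN-LAW:78}. Theorem~\ref{thm::submod} already supplies the two structural ingredients needed: $\bar{\mathsf{R}}$ is monotone increasing and submodular on $2^{\mathcal{P}}$, and, as noted after~\eqref{eq::rewardOpt-const}, the feasibility constraint $|\bar{\mathcal{P}}\cap\mathcal{P}^i|\le 1$ for all $i\in\mathcal{A}$ is precisely a partition matroid over the ground set $\mathcal{P}=\bigcup_{i\in\mathcal{A}}\mathcal{P}^i$. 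It therefore remains to reproduce the $\tfrac{1}{2}$ guarantee that the greedy rule attains for a single matroid, specialized to the agent-by-agent selection performed in Algorithm~\ref{alg:the_alg}. I would also note at the outset that $\bar{\mathsf{R}}(\emptyset)=0$, since an empty policy set collects no reward and contributes no nodal-importance term.

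For the bookkeeping, let $M=|\mathcal{A}|$ and index the agents in the order Algorithm~\ref{alg:the_alg} processes them. Write $\bar{\mathcal{P}}=\{\mathsf{g}_1,\dots,\mathsf{g}_M\}$ for the greedy output, where $\mathsf{g}_i=\mathsf{p}^{i\star}\in\mathcal{P}^i$ is the policy chosen for agent $i$, and let $\bar{\mathcal{P}}_j=\{\mathsf{g}_1,\dots,\mathsf{g}_j\}$ be the partial solution after the $j$-th iteration, with $\bar{\mathcal{P}}_0=\emptyset$ and $\bar{\mathcal{P}}_M=\bar{\mathcal{P}}$. Write the optimal solution as $\mathcal{P}^{\star}=\{\mathsf{o}_1,\dots,\mathsf{o}_M\}$, where $\mathsf{o}_i\in\mathcal{P}^i$ is the (at most one) optimal policy assigned to agent $i$; if the optimum assigns no policy to some agent, I treat the corresponding $\mathsf{o}_i$ as contributing a zero marginal term so that the alignment by agent index is preserved. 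The first step is to bound the optimum by the greedy value plus the optimal increments: by monotonicity $\bar{\mathsf{R}}(\mathcal{P}^{\star})\le \bar{\mathsf{R}}(\mathcal{P}^{\star}\cup\bar{\mathcal{P}})$, and by a telescoping application of submodularity across the elements of $\mathcal{P}^{\star}$, $\bar{\mathsf{R}}(\mathcal{P}^{\star}\cup\bar{\mathcal{P}})\le \bar{\mathsf{R}}(\bar{\mathcal{P}})+\sum_{i=1}^{M}\Delta_{\bar{\mathsf{R}}}(\mathsf{o}_i\,|\,\bar{\mathcal{P}})$.

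The crux is then to control each increment $\Delta_{\bar{\mathsf{R}}}(\mathsf{o}_i\,|\,\bar{\mathcal{P}})$ by the gain the greedy step actually realized. Since $\mathsf{o}_i\in\mathcal{P}^i$, it was an admissible candidate when Algorithm~\ref{alg:the_alg} chose $\mathsf{g}_i$ as the maximizer of $\Delta_{\bar{\mathsf{R}}}(\cdot\,|\,\bar{\mathcal{P}}_{i-1})$ over $\mathcal{P}^i$; hence $\Delta_{\bar{\mathsf{R}}}(\mathsf{o}_i\,|\,\bar{\mathcal{P}}_{i-1})\le \Delta_{\bar{\mathsf{R}}}(\mathsf{g}_i\,|\,\bar{\mathcal{P}}_{i-1})$. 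Because $\bar{\mathcal{P}}_{i-1}\subset\bar{\mathcal{P}}$, submodularity of $\bar{\mathsf{R}}$ gives $\Delta_{\bar{\mathsf{R}}}(\mathsf{o}_i\,|\,\bar{\mathcal{P}})\le \Delta_{\bar{\mathsf{R}}}(\mathsf{o}_i\,|\,\bar{\mathcal{P}}_{i-1})$, so chaining the two yields $\Delta_{\bar{\mathsf{R}}}(\mathsf{o}_i\,|\,\bar{\mathcal{P}})\le \Delta_{\bar{\mathsf{R}}}(\mathsf{g}_i\,|\,\bar{\mathcal{P}}_{i-1})$. Summing over $i$ and recognizing the right-hand side as a telescoping sum equal to $\bar{\mathsf{R}}(\bar{\mathcal{P}})-\bar{\mathsf{R}}(\emptyset)=\bar{\mathsf{R}}(\bar{\mathcal{P}})$, I obtain $\sum_{i=1}^{M}\Delta_{\bar{\mathsf{R}}}(\mathsf{o}_i\,|\,\bar{\mathcal{P}})\le \bar{\mathsf{R}}(\bar{\mathcal{P}})$. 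Combining with the bound of the previous paragraph gives $\bar{\mathsf{R}}(\mathcal{P}^{\star})\le 2\,\bar{\mathsf{R}}(\bar{\mathcal{P}})$, which is the claim.

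The argument is essentially a specialization of a known result, so the main subtlety is not analytic but structural: I must make sure the per-agent matching $\mathsf{o}_i\leftrightarrow\mathsf{g}_i$ is legitimate, which is exactly where the partition matroid form is used, since each $\mathcal{P}^i$ contributes at most one element to any feasible set, so $\mathsf{o}_i$ is guaranteed to have been available in the $i$-th greedy step. The remaining care points are the correct handling of agents for which the optimum selects nothing, so that the telescoping over $\mathcal{P}^{\star}$ and the alignment with greedy steps both remain valid, and confirming $\bar{\mathsf{R}}(\emptyset)=0$ so that the telescoping of greedy gains collapses cleanly to $\bar{\mathsf{R}}(\bar{\mathcal{P}})$.
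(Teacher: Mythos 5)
Your proof is correct and takes essentially the same route as the paper: the paper establishes monotonicity and submodularity in Theorem~\ref{thm::submod} and then simply invokes Theorem 5.1 of~\cite{MLF-GLN-LAW:78}, whereas you unpack that citation by reproducing the standard telescoping/exchange argument for the locally greedy rule under a partition matroid constraint. Your bookkeeping is sound, including the per-agent alignment of $\mathsf{o}_i$ with $\mathsf{g}_i$ (justified by the disjointness of the $\mathcal{P}^i$) and the observation that $\bar{\mathsf{R}}(\emptyset)=0$ collapses the greedy telescoping sum to $\bar{\mathsf{R}}(\bar{\mathcal{P}})$.
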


\begin{proof}
Since the objective function of~\eqref{eq::rewardOpt_nodal}  is monotone increasing and submodular over $\mathcal{P}$, the proof follows by invoking~\cite[Theorem 5.1]{MLF-GLN-LAW:78}.
\end{proof}

\subsection{Comments on decentralized implementations of Algorithm~\ref{alg:the_alg}}
To implement Algorithm~\ref{alg:the_alg},  given the current position of each agent and $\{\bar{\mathfrak{t}}^0_v\}_{v\in\mathcal{V}}$ at the beginning of each planning horizon, the admissible set of policies $\mathcal{P}^i$ for each agent $i\in\mathcal{A}$ should be calculated. 

Let every agent know $\{\psi_v(t)\}_{v\in\mathcal{V}}$.
A straightforward decentralized implement of Algorithm~\ref{alg:the_alg} then is a multi-centralized solution. In this solution, agents transmit the feasible policy sets across the entire network until each agent knows the whole policy set $\mathcal{P}^i,\,\, \forall i \in \mathcal{A}$ (flooding approach). 
Then, each agent acts as a central node and runs a copy of Algorithm~\ref{alg:the_alg} locally.
Although reasonable for small-size networks, the communication and storage costs of this approach scale poorly with the network size. The sequential structure of Algorithm~\ref{alg:the_alg} however, offers an opportunity for a communicationally and computationally more efficient decentralized implementations, as described in steps 1 to 9 of Algorithm~\ref{alg:the_alg_decen}. Step 10 of Algorithm~\ref{alg:the_alg_decen} is included for receding horizon implementation purpose, where the execution plan  can be for example one or all of the agents visit at least one node. To implement Algorithm~\ref{alg:the_alg_decen}, we assume that the agents $\mathcal{A}$ can form a bidirectional connected communication graph $\mathcal{G}^a=(\mathcal{A},\mathcal{E}^a)$, i.e., there is a path from every agent to every other agent on $\mathcal{G}^a$. Then, there always exists a route $\mathtt{SEQ} = \mathtt{s}_1 \to \cdots\to\mathtt{s}_i\to\cdots \to \mathtt{s}_K$,  $\mathtt{s}_k \in \mathcal{A}$, $k\in \{1,\cdots,K\}$, $K\!\geq\! M$, that visits all the agents (not necessarily only one time), see Fig.~\ref{Fig::decentralized}(a). The agents follow $\mathtt{SEQ}$ to share their information while implementing Algorithm~\ref{alg:the_alg_decen}.  
\begin{algorithm}[t]
\caption{Decentralized Implementation of Sequential Greedy Algorithm}
\label{alg:the_alg_decen}
{
\begin{algorithmic}[1]
\State $\mathbf{\text{Init:}} ~\bar{\mathcal{P}} \gets \emptyset, \,\, i\gets 1$, $\{\bar{\mathfrak{t}}_v^0\}_{v\in\mathcal{V}}$
\While{$i\leq K$}
\If {$\mathtt{s}_i$ \text{is being called for the first time}}
\State agent $\mathtt{s}_i$ computes $\mathsf{p}^{\mathtt{s}_i\star}=\underset{\mathsf{p}\subset \mathcal{P}^{\mathtt{s}_i}}{\argmax} \quad \Delta_{\bar{\mathsf{R}}} (\mathsf{p}|\bar{\mathcal{P}})$.
\State $\bar{\mathcal{P}} \gets\bar{\mathcal{P}}\cup \mathsf{p}^{\mathtt{s}_i\star}$.
  \EndIf
\State agent $\mathtt{s}_i$ \text{pass} $\bar{\mathcal{P}}$ to $\mathtt{s}_{i+1}$.
\State $i \gets i+1$.
\EndWhile\\
agent $\mathtt{s}_K$ based on the execution plan of the receding horizon operation updates $\{\bar{\mathfrak{t}}_v^0\}_{v\in\mathcal{V}}$ and communicates it to the team 
\end{algorithmic}
}
\end{algorithm} 
The communication cost to execute Algorithm~\ref{alg:the_alg_decen} can be optimized by picking $\mathtt{SEQ}$ to be the shortest path~\cite{ELL-JKL-AHGRK-DBS:85} that visits all the agents over graph $\mathcal{G}^a$. If $\mathcal{G}^a$ has a Hamiltonian path, the optimal choice for $\mathtt{SEQ}$ is a Hamiltonian path. Recall that a Hamiltonian path is a path that visits every agent on $\mathcal{G}^a$ only once~\cite{FR:74}. When, there is a $\mathtt{SEQ}$ that visits every agent on $\mathcal{G}^a$, the directed information graph $\mathcal{G}^I=(\mathcal{A},\mathcal{E}^I)$ of Algorithm~\ref{alg:the_alg_decen}, which shows the information access of each agent while implementing Algorithm~\ref{alg:the_alg_decen}, is full, see Fig.~\ref{Fig::decentralized}. That is, each agent in $\mathtt{SEQ}$ is aware of the previous agents' decision. Therefore, the solution obtained by Algorithm~\ref{alg:the_alg_decen} is an exact sequential greedy algorithm and its optimality gap is $1/2$. We recall that the labeling order of the mobile agents does not have an effect on the optimality gap guaranteed by Theorem~\ref{thm::opt_gap_Alg1}~\cite{BG-SLS:18}. If an agent $i\in\mathcal{A}$ appears repeatedly in $\mathtt{SEQ}$ (e.g., the blue agent in Fig.~\ref{Fig::decentralized}), with a slight increase in  computation cost, we can modify Algorithm~\ref{alg:the_alg_decen} to allow agent $i$ to redesign and improve its sub-optimal policy $\mathsf{p}^{i\star}$ by re-executing step 4 of Algorithm~\ref{alg:the_alg_decen}.

Another form of decentralized implementation of Algorithm~\ref{alg:the_alg}, which may be more relevant in urban environments, is through a client-server framework implemented over a cloud. In this framework, agents (clients) connect to shared memory on a cloud (server) to download or upload information or use the cloud's computing power asynchronously. 
Let $\{\mathcal{T}^i\}, \, i\in\mathcal{A}$, be the set of disjoint time slots that is allotted respectively to agents $\mathcal{A}$, see Fig.~\ref{fig:server_client}. To implement Algorithm~\ref{alg:the_alg}, agent $i\in\mathcal{A}$ connects to the server at the beginning of $\mathcal{T}^i$ to check out $\bar{\mathcal{P}}$ and $\{\bar{\mathfrak{t}}_v^0\}_{v\in\mathcal{V}}$. Then, it completes steps $4$ and $5$ of Algorithm~\ref{alg:the_alg}, and checks in the updated $\bar{\mathcal{P}}$ to the server before $\mathcal{T}^i$ elapses fully. The last agent based on the execution plan of the receding horizon operation updates $\{\bar{\mathfrak{t}}_v^0\}_{v\in\mathcal{V}}$ and checks it in the cloud memory for next receding horizon planning. Since the time slots assigned to the agents do not overlap, agent $i$ has access to policy $\mathsf{p}^{k\star}$ of all agents $k$ which has already communicated to the cloud. Thus, the information graph $\mathcal{G}^I$ is full, and the optimality gap of $1/2$ holds.

If there is a message dropout while executing Algorithm~\ref{alg:the_alg_decen} or in the decentralized server-client based operation an agent $j$ takes a longer time than $\mathcal{T}^j$ to complete and check-in $\bar{\mathcal{P}}$ to the cloud, the information graph becomes incomplete, see for example Fig.~\ref{fig:server_client}. Then, the corresponding decentralized implementation deviates from the exact sequential greedy Algorithm~\ref{alg:the_alg_decen}. For such cases,~\cite{BG-SLS:18} shows that the optimality gap instead of $1/2$ becomes $\frac{1}{M-\omega(\mathcal{G}^I)+2}$, where $\omega(\mathcal{G}^I)$ is the clique number of $\mathcal{G}^I$~\cite{BG-SLS:18}. Recall that the clique number of a graph is equal to the number of the nodes in the largest sub-graph such that adding an edge will cause a cycle \cite{JAB-USRM:76}.

\begin{figure}[t]
    \centering
    {
     \begin{tikzpicture}[auto,thick,scale=0.4, every node/.style={scale=0.4},node distance=2.5cm]
\tikzset{edge/.style = {->,> = latex'}}
el/.style = {inner sep=2pt, align=left, sloped},
             
\node (1)at (-4.5,0)  [inner sep=0pt] {\includegraphics[width=.045\textwidth]{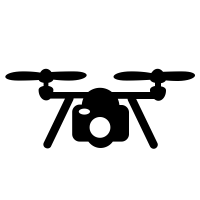}};

\node (2) [right of=1,inner sep=0pt] {\includegraphics[width=.045\textwidth]{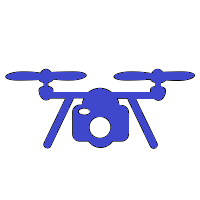}};

\node (3) [below left of=2,inner sep=0pt] {\includegraphics[width=.045\textwidth]{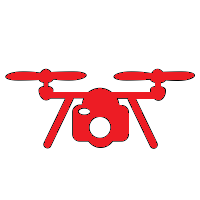}};

\node (4) [below right of=2,inner sep=0pt] {\includegraphics[width=.045\textwidth]{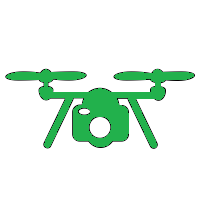}};

\node (5) [right of=2,inner sep=0pt] {\includegraphics[width=.045\textwidth]{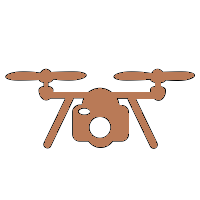}};

\draw[latex'-latex']  (1)--(2);
\draw[latex'-latex'] (3)-- (4) ;
\draw[latex'-latex']  (4)-- (2) ;
\draw[latex'-latex'] (2)-- (3) ;
\draw[latex'-latex']  (2)-- (5) ;

\path[edge,dashed,bend left, bend angle = 5 , color = red] (1) edge node[above]{$1$} (2);
\path[edge,dashed,bend right, bend angle = 5, color = red] (2) edge node[left]{$2$} (3);
\path[edge,dashed,bend right, bend angle = 5, color = red] (3) edge node[below]{$3$} (4);

\path[edge,dashed,bend left, bend angle = 5, color = red] (2) edge node[above]{$5$} (5);

\path[edge,dashed,bend left, bend angle = 5, color = red] (4) edge node[right]{$4$} (2);

\node (6)at (4.5,0) [inner sep=0pt] {\includegraphics[width=.045\textwidth]{drone1.png}};

\node (7) [right=0.5cm of 6,inner sep=0pt] {\includegraphics[width=.045\textwidth]{drone2.png}};

\node (8) [right=0.5cm of 7,inner sep=0pt] {\includegraphics[width=.045\textwidth]{drone3.png}};

\node (9) [right=0.5cm of 8,inner sep=0pt] {\includegraphics[width=.045\textwidth]{drone4.png}};

\node (10) [right=0.5cm of 9,inner sep=0pt] {\includegraphics[width=.045\textwidth]{drone5.png}};

\draw[edge]  (6)to (7);
\draw[edge]  (7)to (8) ;
\draw[edge]  (8)to (9) ;
\draw[edge]  (9)to (10) ;

\path[edge,bend left, bend angle = 5] (6) edge (8);
\path[edge,bend left, bend angle = 5] (6) edge (9);
\path[edge,bend left, bend angle = 5] (6) edge (10);

\path[edge,bend right, bend angle = 5] (7) edge (9);
\path[edge,bend right, bend angle = 5] (7) edge (10);

\path[edge,bend left, bend angle = 5] (8) edge (10);

\end{tikzpicture}
}
\caption{{\small
  The plot on the left shows the  bi-directional communication graph $\mathcal{G}^a$ in black along with an example $\mathtt{SEQ}$ path in red. The plot on the right shows the complete information sharing graph $\mathcal{G}^I$ if agents follow $\mathtt{SEQ}$ while implementing Algorithm~\ref{alg:the_alg_decen}. Arrow going from agent $i$ to agent $j$ means that agent $j$ receives agent $i$'s information.}
 }
\label{Fig::decentralized}
\end{figure}

\begin{figure}[t]
    \centering
    {
    {\includegraphics[trim=0 3pt 0 0,clip,width=.18\textwidth]{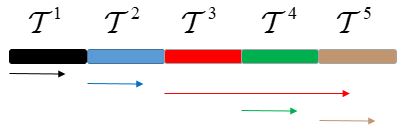}}\quad
    {
     \begin{tikzpicture}[auto,thick,scale=0.40, every node/.style={scale=0.40},node distance=2.5cm]
\tikzset{edge/.style = {->,> = latex'}}
el/.style = {inner sep=2pt, align=left, sloped},

\node (6)at (-0.2, -4.5) [inner sep=0pt] {\includegraphics[width=.045\textwidth]{drone1.png}};

\node (7) [right=0.5cm of 6,inner sep=0pt] {\includegraphics[width=.045\textwidth]{drone2.png}};

\node (8) [right=0.5cm of 7,inner sep=0pt] {\includegraphics[width=.045\textwidth]{drone3.png}};

\node (9) [right=0.5cm of 8,inner sep=0pt] {\includegraphics[width=.045\textwidth]{drone4.png}};

\node (10) [right=0.5cm of 9,inner sep=0pt] {\includegraphics[width=.045\textwidth]{drone5.png}};

\draw[edge]  (6)to (7);
\draw[edge]  (7)to (8) ;
\draw[edge]  (9)to (10) ;

\path[edge,bend left, bend angle = 5] (6) edge (8);
\path[edge,bend left, bend angle = 5] (6) edge (9);
\path[edge,bend left, bend angle = 5] (6) edge (10);

\path[edge,bend right, bend angle = 5] (7) edge (9);
\path[edge,bend right, bend angle = 5] (7) edge (10);

\end{tikzpicture}
}}
\caption{{\small
  $\{\mathcal{T}^i\}_{i \in \mathcal{A}}, \, \mathcal{A}=\{1,2,3,4,5\}$ are the time slots allotted to each agent to connect to the cloud. The arrows show the time each agent took to do their calculations for an example scenario. Here, the associated information graph $\mathcal{G}^I$ is as the incomplete graph on the right with clique number of 3.}
 }
\label{fig:server_client}
\end{figure}

\section{Numerical Example}
We consider persistent monitoring using $3$ agents for event detection over an area that is divided into $20$ by $20$ grid map as shown in Fig.~\ref{fig:ex1}(a). The geographical nodes of interest $\mathcal{V}$ are the center of the cells in Fig.~\ref{fig:ex1}(a). The agents can travel from a cell to the neighboring cells in the right, left, bottom, and top. The agents are homogeneous and the travel time between any neighboring nodes for all the agents are identical and equal to $1$ second. The agents start their patrolling task from the nodes where they are depicted in Fig.~\ref{fig:ex1}(a). We model the event occurrence in each geographical node as a Poisson process and define our reward function at each node $v\in\mathcal{V}$ as~\eqref{eq::Ri} with $\psi_v(t)=1-\text{e}^{\lambda_v t}$ where $\lambda_v\in\real_{>0}$ is the arrival rate of the event; for more details see~\cite{NR-SSK:19}. Fig.~\ref{fig:ex1}(a) shows the reward value of the nodes at $t=120$ seconds when there is no monitoring. The color intensity of the cells in Fig.~\ref{fig:ex1}(a) is proportional to $\lambda_v$; the higher $\lambda_v$, the darker the color of node $v$. The region enclosed by the blue rectangle initially has a low reward but after $100$ seconds its reward value is increased to a higher value by changing $\lambda_v$ of the corresponding cells.  An animated depiction of the change in the reward map because of different dispatch policies we discuss below is available in~\cite{NR-SSK:19-youtube}. We compare the performance of Algorithm~\ref{alg:the_alg}, implemented in a receding horizon fashion, and a conventional greedy algorithm where each agent always moves to the neighboring node that has the instantaneous highest reward value. In implementing Algorithm~\ref{alg:the_alg} in a receding horizon fashion, we assume that the planning horizon is $4$ seconds and the execution horizon is $1$ second. We consider both the case of including ($\alpha=0.1$) and excluding ($\alpha=0$) the nodal importance measure in the reward function~\eqref{eq::reward_global_op_NI2}. Fig.~\ref{fig:ex1}(b) shows that the traditional greedy cell selection performs poorly compared to the other two planning algorithms. The reason is that the three agents' decision becomes the same after a while, i.e., they start choosing the same cell after a while and moving together, therefore all three agents act as if one agent is patrolling (recall Assumption~\ref{assm::scan}). The performance of Algorithm~\ref{alg:the_alg} is better than a standard greedy cell selection because the effect of agent $i$'s patrolling policy is taken into account when agent $i+1$ is designed. Therefore, the chances that all three agents go to the same cell together and move together is narrow. Furthermore, we can note that implementing Algorithm~\ref{alg:the_alg} by considering the effect of nodal importance delivers a better outcome. The reason is that in the case that there is no nodal importance, the agents are drawn to the region of high importance near them and stay there as Fig.~\ref{fig:ex1}(c) shows. However, there are other important regions with higher values that are farther away, especially the area on the left top corner which is separated by a low rate stripe from where agents start. Incorporating nodal importance, as Fig.~\ref{fig:ex1}(d) shows steers the agents to the regions with a higher rate of reward that are beyond the receding horizon's sight.

\begin{figure}
\centering
\subfloat[Reward map ]{
    {\includegraphics[trim=20pt 10pt 10pt 5pt ,clip,width=.20\textwidth]{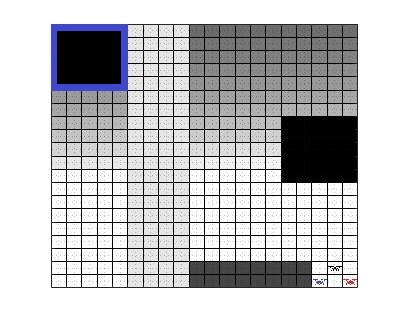}}}~
\subfloat[The collected reward]{    
    {\includegraphics[trim=2pt  5pt 10pt 20pt, clip,width=.22\textwidth]{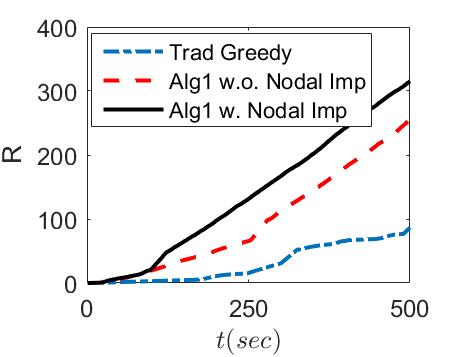}}}\\\vspace{-0.1in}
\subfloat[Agents' path when they follow Algorithm~\ref{alg:the_alg} and use $\alpha=0$ over ${[}0,150{]}$ seconds]{
    {\includegraphics[trim=30pt 30pt 30pt 20pt ,clip,width=.215\textwidth]{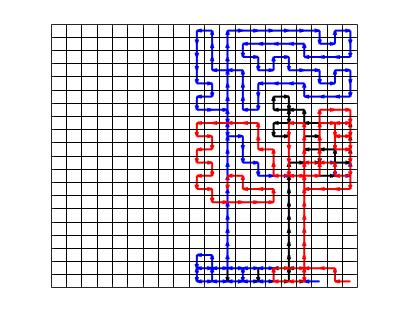}}}~~
\subfloat[Agents' path when they follow Algorithm~\ref{alg:the_alg} and use $\alpha=0.1$ over ${[}0,150{]}$ seconds]{    
    {\includegraphics[trim=30pt 30pt 30pt 20pt ,clip,width=.215\textwidth]{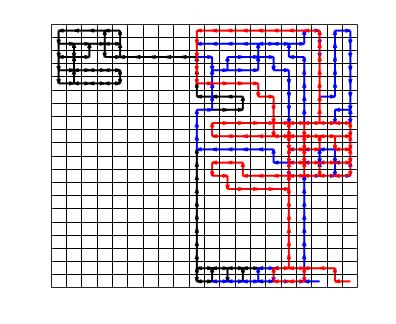}}}
    \caption{{\small Three agents patrol a field, divided into $20$ by $20$ cells.}}%
    \label{fig:ex1}%
\end{figure}

We presented a multi-agent dispatch policy design for persistent monitoring of a set of finite inter-connected geographical nodes. Our design relied on assigning an increasing and concave reward function of time to each node that reset to zero after a visit by an agent. We defined our design utility function as the sum of the rewards scored for the team when agents visit the geographical nodes. By showing that the utility function is a monotone increasing and submodular set function, we laid the ground to propose a suboptimal solution with a known optimality gap for our dispatch policy design, which was NP-hard. To induce robustness to the changes in the problem parameters, we proposed our suboptimal solution in a receding horizon setting. Next, to compensate for the shortsightedness of the receding horizon approach, we added a new term, called the relative nodal importance, to the utility function as a measure to incorporate a notion of the importance of the regions beyond the feasible solution set of the receding horizon optimization problem. Our numerical example demonstrated the benefit of introducing this term. Lastly, we discussed how our suboptimal solution can be implemented in a decentralized manner. Our future work is to investigate decentralized algorithms that allow agents to communicate synchronously with each other in order to have a consensus on a policy with a known optimality~gap.

\section*{Appendix}
\setcounter{equation}{0}
\renewcommand{\theequation}{A.\arabic{equation}}
\renewcommand{\thethm}{A.\arabic{thm}}
\renewcommand{\thelem}{A.\arabic{lem}}
\textbf{[Proof of Lemma~\ref{lem::complexity}]}
The time complexity of constructing the admissible policy set $\mathcal{P}^i$ is of order of the number of possible paths that agent $i\in\mathcal{A}$ can traverse over the mission horizon while respecting Assumption~\ref{as::travel}, which is of order $\mathsf{D}^{\bar{\mathsf{n}}^i}$. Thus, the time complexity of constructing the feasible set  $\mathcal{P}\!=\!\bigcup_{i \in \mathcal{A}}^{}\mathcal{P}^i$ is $O(\sum_{i \in \mathcal{A}}^{} \mathsf{D}^{\bar{\mathsf{n}}^i})$. Next, let $\bar{\mathcal{P}}$ be any subset of $\mathcal{P}$ that satisfies constraint~\eqref{eq::rewardOpt-const}. Due to  Assumption~\ref{assm::scan}, the reward scored by implementing policy ${\mathsf{p}}\!=\!(\vectsf{V}_{\mathsf{p}},\vectsf{T}_{\mathsf{p}},\mathsf{a}_{\mathsf{p}})\in\bar{\mathcal{P}}$ cannot be calculated independent from the all the other policies in $\bar{\mathcal{P}}\backslash\{\mathsf{p}\}$.
Hence, to solve optimization problem~\eqref{eq::rewardOpt}, we need to evaluate all the possible policy sets $\bar{\mathcal{P}}$ satisfying the constraint~\eqref{eq::rewardOpt-const}. Since $\bar{\mathcal{P}}$ can have at most one policy from the policy set $\mathcal{P}^i$ of $i\! \in\! \mathcal{A}$ and $\mathcal{P}^i$ has $O(\sum_{i \in \mathcal{A}}^{} \mathsf{D}^{\bar{\mathsf{n}}^i})$ members, then $O(\prod_{i=1}^{M} \mathsf{D}^{\bar{\mathsf{n}}^i})$ different possibilities of $\bar{\mathcal{P}}$ exist which determines the time complexity of solving~\eqref{eq::rewardOpt}. 
\boxend

We develop the auxiliary results below to use in the proof of Theorem~\ref{thm::submod}.  These results show some of the properties of the sum of evaluation of a concave and increasing function over increasing sequences and their concatenation. 
The decreasing sequence $(\delta \mathfrak{t})_1^n$ \emph{majorizes} the decreasing sequence $(\delta \mathfrak{v})_1^n$, if $$\delta \mathfrak{t}_1 \!\geq\! \delta \mathfrak{t}_2 \!\geq \!\cdots \!\geq\! \delta \mathfrak{t}_n,$$ $$\delta \mathfrak{v}_1 \!\geq\! \delta \mathfrak{v}_2 \!\geq \!\cdots \!\geq\! \delta \mathfrak{v}_n,$$ $$\delta \mathfrak{t}_1+\cdots+\delta \mathfrak{t}_i \geq \delta \mathfrak{v}_1+\cdots+\delta \mathfrak{v}_i,\,\,\, \forall i\! \in\! \{1,\cdots,n-1 \}$$ and $$\delta \mathfrak{t}_1+\cdots+\delta \mathfrak{t}_n = \delta \mathfrak{v}_1+\cdots+\delta \mathfrak{v}_n$$~hold.

\begin{lem}\label{lem::krm}
Let $f:\real \to \real$ be a concave and increasing function with $f(0)=0$. If sequences $(\delta \mathfrak{t})_1^n$ and $(\delta \mathfrak{v})_1^m$ with $n\leq m$ satisfy $\delta \mathfrak{t}_1+\cdots+\delta \mathfrak{t}_i \geq \delta \mathfrak{v}_1+\cdots+\delta \mathfrak{v}_i, \quad \forall i \in \{1,\cdots,n-1 \}$ and $\delta \mathfrak{t}_1+\cdots+\delta \mathfrak{t}_n = \delta \mathfrak{v}_1+\cdots+\delta \mathfrak{v}_m$ then $$f(\delta \mathfrak{t}_1)+\cdots+f(\delta \mathfrak{t}_n) \leq f(\delta \mathfrak{v}_1)+\cdots+f(\delta \mathfrak{v}_m)$$ holds.
\end{lem}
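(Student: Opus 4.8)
The plan is to reduce the claim to the classical Karamata inequality for concave functions by equalizing the lengths of the two (decreasing, nonnegative) sequences through zero-padding. Since $n \le m$, I would first extend $(\delta \mathfrak{t})_1^n$ to a length-$m$ sequence $(\delta \mathfrak{t}')_1^m$ by appending $m-n$ trailing zeros, i.e. $\delta \mathfrak{t}'_j = \delta \mathfrak{t}_j$ for $j \le n$ and $\delta \mathfrak{t}'_j = 0$ for $n < j \le m$. Because $f(0)=0$, the appended terms contribute nothing, so $\sum_{j=1}^n f(\delta \mathfrak{t}_j) = \sum_{j=1}^m f(\delta \mathfrak{t}'_j)$, and the left-hand side of the claimed inequality is unchanged by the padding. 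As the original entries are nonnegative and decreasing (as in the majorization definition preceding the lemma), appending zeros at the tail keeps $(\delta \mathfrak{t}')_1^m$ decreasing, so it is a legitimate majorant candidate against $(\delta \mathfrak{v})_1^m$.

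Next I would verify that $(\delta \mathfrak{t}')_1^m$ majorizes $(\delta \mathfrak{v})_1^m$ in the sense of the definition stated above. The equal-total-sum condition is immediate, since $\sum_{j=1}^m \delta \mathfrak{t}'_j = \sum_{j=1}^n \delta \mathfrak{t}_j = \sum_{j=1}^m \delta \mathfrak{v}_j$ by hypothesis. For the partial-sum domination I would split the index range. For $i \le n-1$ the bound $\sum_{j=1}^i \delta \mathfrak{t}'_j \ge \sum_{j=1}^i \delta \mathfrak{v}_j$ is exactly a hypothesis. For $n \le i \le m-1$ the padding adds nothing beyond index $n$, so $\sum_{j=1}^i \delta \mathfrak{t}'_j = \sum_{j=1}^n \delta \mathfrak{t}_j = \sum_{j=1}^m \delta \mathfrak{v}_j \ge \sum_{j=1}^i \delta \mathfrak{v}_j$, where the last step is the nonnegativity of the trailing terms $\delta \mathfrak{v}_{i+1},\dots,\delta \mathfrak{v}_m$. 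This establishes majorization at every index.

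With majorization in hand, I would invoke Karamata's inequality~\cite{ZK-DD-ML-IM:05} for the concave $f$: since $(\delta \mathfrak{t}')_1^m$ majorizes $(\delta \mathfrak{v})_1^m$, one has $\sum_{j=1}^m f(\delta \mathfrak{t}'_j) \le \sum_{j=1}^m f(\delta \mathfrak{v}_j)$. Combining this with the padding identity $\sum_{j=1}^m f(\delta \mathfrak{t}'_j) = \sum_{j=1}^n f(\delta \mathfrak{t}_j)$ yields the claimed $\sum_{j=1}^n f(\delta \mathfrak{t}_j) \le \sum_{j=1}^m f(\delta \mathfrak{v}_j)$.

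The only delicate point, and the place where the hypotheses genuinely earn their keep, is extending the partial-sum domination from the range $i \le n-1$ (where it is assumed) to $n \le i \le m-1$: this step is not free and relies on both the equal-sum hypothesis and the nonnegativity of the remaining $\delta \mathfrak{v}$ entries. I would therefore state explicitly that the sequences are decreasing and nonnegative, as they are in the intended application to inter-visit time gaps, since the padding construction and the Karamata step both require $(\delta \mathfrak{t}')_1^m$ to remain a valid decreasing majorant.
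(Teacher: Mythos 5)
Your proposal is correct and follows essentially the same route as the paper: zero-pad the shorter sequence $(\delta \mathfrak{t})_1^n$ to length $m$, use $f(0)=0$ so the padding does not change the left-hand sum, verify that the padded sequence majorizes $(\delta \mathfrak{v})_1^m$, and invoke Karamata's inequality for concave $f$. You are merely more explicit than the paper about checking the partial-sum condition on the range $n \le i \le m-1$ and about the implicit nonnegativity/monotonicity assumptions, which is a reasonable elaboration rather than a different argument.
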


\begin{proof}
We note that the sequence $(\delta \mathfrak{u})_1^m$ defined as $\delta \mathfrak{u}_i = \delta \mathfrak{u}_i$ for $i \in \{1,\cdots,n\}$ and $\delta \mathfrak{u}_i = 0$ for  $i \in \{n+1,\cdots,m\}$ majorizes any sequence $(\delta \mathfrak{v})_1^m$ defined in the lemma statement. Then, since $f(0)=0$, the proof follows from the Karamata's inequality \cite{ZK-DD-ML-IM:05}. 
\end{proof}

\begin{cor}\label{lm::aux1}
Let $f:\realnonnegative \to \realnonnegative$ be a monotone increasing and concave function. Then for any $a,b,c,d\in\real_{\geq0}$ such that $0\leq a \leq c$ and $0\leq b \leq d$, then  $$f(c)+f(d)-f(c+d) \leq f(a)+f(b)-f(a+b)$$ holds.
\end{cor}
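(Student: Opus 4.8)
The plan is to recast the stated inequality as a balanced comparison of $f$ over two equal-length sequences and then invoke the Karamata-type Lemma~\ref{lem::krm}. Since adding a constant to $f$ shifts both sides of the claimed inequality by the same amount, I may first normalize $f(0)=0$ without loss of generality, which places us in the regime covered by Lemma~\ref{lem::krm}. Moving the subtracted terms across, the claim $f(c)+f(d)-f(c+d)\le f(a)+f(b)-f(a+b)$ is equivalent to
\begin{equation*}
f(a+b)+f(c)+f(d)\;\le\;f(a)+f(b)+f(c+d).
\end{equation*}
Both sides are now sums of $f$ over three nonnegative arguments, and the two argument multisets $\{a+b,c,d\}$ and $\{a,b,c+d\}$ share the common total $a+b+c+d$. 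This equal-sum, equal-length structure is exactly the hypothesis under which Lemma~\ref{lem::krm} converts a majorization between the arguments into the desired inequality between the sums of $f$.

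Next I would sort each multiset into a decreasing sequence and seek the majorization that feeds Lemma~\ref{lem::krm}. The coordinatewise bounds $0\le a\le c$ and $0\le b\le d$ immediately give $a+b\le c+d$ as well as $c\le c+d$ and $d\le c+d$, so $c+d$ is the largest of all the arguments and it belongs to the sequence $\{a,b,c+d\}$. These comparisons pin down the leading entries of the two sorted sequences, against which the partial-sum conditions of Lemma~\ref{lem::krm} must be checked.

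The main obstacle will be establishing the majorization in the orientation that Lemma~\ref{lem::krm} demands, so that the resulting inequality lands with the sign asserted in the statement. Because the relative order of $a+b$, $c$, and $d$ is not fixed by the hypotheses, this step calls for a short case analysis on which of these three is the largest, followed by verification that the leading partial sums dominate as required while the totals agree. Since the largest single argument $c+d$ sits on the right-hand side, the partial sums must be tracked with care to confirm that they fall the way the claim requires; I expect this majorization bookkeeping, rather than any further appeal to concavity, to be the delicate heart of the argument, after which Lemma~\ref{lem::krm} yields the conclusion at once.
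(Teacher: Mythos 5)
There is a genuine problem here, but it is not one your bookkeeping can fix: the inequality you are trying to prove is false as printed, and your own setup already shows why. Take $f(x)=\sqrt{x}$, $a=b=0$, $c=d=1$: the displayed claim reads $f(1)+f(1)-f(2)\leq f(0)+f(0)-f(0)$, i.e.\ $2-\sqrt{2}\leq 0$, which fails. The structural reason is exactly the orientation issue you flagged and deferred. With $X=\{a,b,c+d\}$ and $Y=\{a+b,c,d\}$, the totals agree, the largest element overall is $c+d\in X$ (since $a+b\leq c+d$, $c\leq c+d$, $d\leq c+d$), and the smallest is $\min(a,b)\in X$ (since $\min(a,b)\leq\min(a+b,c,d)$); hence $X$ majorizes $Y$, not the other way around. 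Karamata for a \emph{concave} $f$ (or Lemma~\ref{lem::krm} with $\delta\mathfrak{t}$ the sorted $X$ and $\delta\mathfrak{v}$ the sorted $Y$) then yields $f(a)+f(b)+f(c+d)\leq f(a+b)+f(c)+f(d)$, which rearranges to $f(a)+f(b)-f(a+b)\leq f(c)+f(d)-f(c+d)$ --- the \emph{reverse} of the corollary's displayed inequality. No case analysis on the relative order of $a+b$, $c$, $d$ can reverse this: the majorization direction is forced by $c+d$ being maximal and $\min(a,b)$ being minimal, so the ``delicate heart'' you anticipated is precisely where the attempt collapses if you pursue the stated sign.

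The statement itself carries a typo. The paper's own proof follows the same route you planned (rearrange to two equal-sum three-term multisets, establish a majorization by cases, apply Karamata), and although its intermediate display is also sign-confused, its \emph{final} line is $f(a)+f(b)-f(a+b)\leq f(c)+f(d)-f(c+d)$, which is the mathematically correct fact and the one actually invoked downstream: in Lemma~\ref{lem::auxMain2}, $\Delta S_p\geq 0$ requires the additivity defect $f(x)+f(y)-f(x+y)$ to be \emph{monotone increasing} in $(x,y)$, i.e.\ the corrected direction. So your method is sound and essentially identical to the paper's; what is missing is the recognition that the target as printed should be flipped. Two smaller notes: your normalization $f(0)=0$ is valid (a constant shift moves both sides equally) but unnecessary here, since the two sequences have equal length and Karamata needs $f(0)=0$ only for the zero-padding in Lemma~\ref{lem::krm}; and monotonicity of $f$ is likewise not needed for the equal-length comparison --- concavity alone suffices once the majorization is in hand.
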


\begin{proof}
The assumption is that $a\leq c$ and $b\leq d$. Therefor, we have $a+b\leq c+d$. By taking $a,b,c+d$ and $c,d,a+b$ as $\delta \mathfrak{t}$'s $\delta \mathfrak{v}$'s respectively. There will be two possible cases for $\delta \mathfrak{t}$'s as 
\begin{align*}
&(A1):\, \,\, \, \delta \mathfrak{t}_1= c+d, \, \, \, \delta \mathfrak{t}_2= a,  \, \, \, \delta \mathfrak{t}_3= b,\\
&(A2):\, \,\, \, \delta \mathfrak{t}_1= c+d, \, \, \, \delta \mathfrak{t}_2= b,  \, \, \, \delta \mathfrak{t}_3= a,
\end{align*}
and there will be six possible cases for $\delta \mathfrak{v}$'s as 
\begin{align*}
&(B1):\, \,\, \, \delta \mathfrak{v}_1= a+b, \, \, \, \delta \mathfrak{v}_2= d,  \, \, \, \delta \mathfrak{v}_3= c,\\
&(B2):\, \,\, \, \delta \mathfrak{v}_1= a+b, \, \, \, \delta \mathfrak{v}_2= c,  \, \, \, \delta \mathfrak{v}_3= d,\\
&(B3):\, \,\, \, \delta \mathfrak{v}_1= d, \, \, \, \delta \mathfrak{v}_2= a+b,  \, \, \, \delta \mathfrak{v}_3= c,\\
&(B4):\, \,\, \, \delta \mathfrak{v}_1= c, \, \, \, \delta \mathfrak{v}_2= a+b,  \, \, \, \delta \mathfrak{v}_3= d,\\
&(B5):\, \,\, \, \delta \mathfrak{v}_1= c, \, \, \, \delta \mathfrak{v}_2= d,  \, \, \, \delta \mathfrak{v}_3= a+b,\\
&(B6):\, \,\, \, \delta \mathfrak{v}_1= d, \, \, \, \delta \mathfrak{v}_2= c,  \, \, \, \delta \mathfrak{v}_3= a+b.
\end{align*}

Taking any cases of $A$ or $B$, we have $\delta \mathfrak{t}_1 + \delta \mathfrak{t}_2 + \delta \mathfrak{t}_3 = \delta \mathfrak{v}_1 + \delta \mathfrak{v}_2 + \delta \mathfrak{v}_3 = a+b+c+d$. Comparing any cases of $A$ with any cases of $B$, $\delta \mathfrak{t}_1 \geq \delta \mathfrak{v}_1$. Taking case $(A1)$, since $a>b$ then we have $c+d+a \geq a+b+d$ and $c+d+a \geq a+b+c$ and also simply we have $c+d+a \geq c+d$. Therefor, Taking case $A1$ and comparing with any cases of $B$, we have $\delta \mathfrak{t}_1 + \delta \mathfrak{t}_2 \geq \delta \mathfrak{v}_1 + \delta \mathfrak{v}_2$. The same reasoning also can be done for case $A2$. Hence taking any cases of $A$ and $B$, we know that $\delta \mathfrak{t}_1,\delta \mathfrak{t}_2,\delta \mathfrak{t}_3$ majorizes $\delta \mathfrak{v}_1,\delta \mathfrak{v}_2,\delta \mathfrak{v}_3$. This results in 
\begin{align*}
     f(c)+f(d)+f(a+b)\leq f(a)+f(b)+f(c+d)
\end{align*}
and consequently 

\begin{align*}
    f(a)+f(b)-f(a+b)\leq f(c)+f(d)-f(c+d).
\end{align*}

\end{proof}

\begin{lem}\label{lem::auxMain1}
For any $(\mathfrak{q})_{1}^l$, let 
$$
     g((\mathfrak{q})_1^{l})=\sum\nolimits_{i=1}^{l-1}f(\Delta \mathfrak{q}_i),
$$
where $\Delta \mathfrak{q}_i=\mathfrak{q}_{i+1}-\mathfrak{q}_{i}$ and $f$ be a concave and increasing function with $f(0)=0$. Now, consider two increasing sequences $(\mathfrak{t})^{n}_1$ and $(\mathfrak{u})_1^{l}$, and their concatenation $(\mathfrak{a})_1^{n+l}=(\mathfrak{t})^{n}_1 \oplus (\mathfrak{u})_1^{l}$. Then,
$$
  g((\mathfrak{a})_1^{n+l})-g((\mathfrak{t})_1^{n}) \geq 0 .
$$ holds
\end{lem}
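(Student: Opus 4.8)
The plan is to prove the inequality by induction on $l$, the number of elements of $(\mathfrak{u})_1^l$ that are merged into $(\mathfrak{t})_1^n$ to build the concatenation $(\mathfrak{a})_1^{n+l}$. The guiding observation is that forming the merged increasing sequence amounts to inserting the points $\mathfrak{u}_1,\dots,\mathfrak{u}_l$ one at a time into the current sorted sequence while preserving order, and that each single insertion can only increase the value of $g$. Hence it suffices to establish that $g$ is nondecreasing under a single order-preserving insertion, and then to chain the $l$ insertions.

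Before the induction I would record two elementary facts about $f$. First, since $f$ is increasing and $f(0)=0$, we have $f(x)\geq 0$ for every $x\in\realnonnegative$. Second, $f$ is subadditive on $\realnonnegative$: for $x,y\geq 0$ with $s=x+y>0$, concavity together with $f(0)=0$ gives $f(x)\geq \frac{x}{s}f(s)$ and $f(y)\geq \frac{y}{s}f(s)$, whence $f(x)+f(y)\geq f(x+y)$ (the case $s=0$ being trivial). This is precisely the content of Corollary~\ref{lm::aux1} in the instance where the smaller arguments are set to $0$.

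The inductive step is the analysis of inserting a single point $p$ into a sorted increasing sequence $(\mathfrak{s})_1^k$. There are three cases according to where $p$ lands. If $p\leq \mathfrak{s}_1$ or $p\geq \mathfrak{s}_k$, then exactly one new nonnegative gap ($\mathfrak{s}_1-p$ or $p-\mathfrak{s}_k$) is appended at an end, so $g$ increases by $f$ of that gap, which is $\geq 0$ by the nonnegativity fact. If $\mathfrak{s}_j\leq p\leq \mathfrak{s}_{j+1}$ for some interior index $j$, then the single gap $\Delta=\mathfrak{s}_{j+1}-\mathfrak{s}_j$ is replaced by the two gaps $\Delta_1=p-\mathfrak{s}_j$ and $\Delta_2=\mathfrak{s}_{j+1}-p$ with $\Delta_1+\Delta_2=\Delta$, so the net change in $g$ is $f(\Delta_1)+f(\Delta_2)-f(\Delta)\geq 0$ by subadditivity. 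A coincidence $p=\mathfrak{s}_j$ merely produces a zero gap, and $f(0)=0$ keeps the bound valid. In every case $g$ does not decrease.

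Finally I would assemble the argument: starting from $(\mathfrak{t})_1^n$ and inserting $\mathfrak{u}_1,\dots,\mathfrak{u}_l$ successively yields $(\mathfrak{a})_1^{n+l}$, and since each insertion leaves $g$ nondecreasing, $g((\mathfrak{a})_1^{n+l})\geq g((\mathfrak{t})_1^n)$, which is the claim. I expect the only real care to be bookkeeping rather than analysis: one must check that the order-preserving merge $\oplus$ is genuinely realized by inserting the $\mathfrak{u}$-points one at a time, and that the boundary insertions (before the first or after the last element of the current sequence) and ties with existing elements are all covered by the three cases above. All the analytic weight rests on the nonnegativity and subadditivity of $f$, which is exactly where the hypotheses of concavity and $f(0)=0$ are used.
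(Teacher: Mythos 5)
Your proof is correct, and it takes a genuinely different route from the paper. The paper's proof restricts $(\mathfrak{a})_1^{n+l}$ to the index range spanned by $\mathfrak{t}_1$ and $\mathfrak{t}_n$, rearranges the consecutive gaps of that sub-sequence and of $(\mathfrak{t})_1^{n}$ into descending order, verifies the partial-sum (majorization-type) conditions, and then invokes the Karamata-style Lemma~\ref{lem::krm}; the terms of $g((\mathfrak{a})_1^{n+l})$ outside that range are handled by nonnegativity of $f$. You instead realize the merge $\oplus$ as $l$ successive order-preserving insertions and observe that each insertion either appends a boundary gap (contributing $f(\text{gap})\geq 0$) or splits one gap $\Delta$ into $\Delta_1+\Delta_2=\Delta$ (contributing $f(\Delta_1)+f(\Delta_2)-f(\Delta)\geq 0$ by subadditivity, which you correctly derive from concavity and $f(0)=0$). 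Your argument is more elementary: it needs only subadditivity rather than the full majorization machinery, and it sidesteps the paper's somewhat delicate construction of the index sets $\mathsf{S}_i$ used to compare partial sums of the reordered gaps; it is also essentially the same single-insertion telescoping mechanism the paper itself deploys in the proof of Lemma~\ref{lem::auxMain2}. One caution on your side remark: Corollary~\ref{lm::aux1} as \emph{stated} in the paper has the inequality in the direction $f(c)+f(d)-f(c+d)\leq f(a)+f(b)-f(a+b)$, whose specialization $a=b=0$ would give superadditivity, not subadditivity; the final line of the paper's own proof of that corollary gives the opposite (and, for concave $f$, correct) direction $f(a)+f(b)-f(a+b)\leq f(c)+f(d)-f(c+d)$, which is the one your specialization actually matches. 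Since you prove subadditivity directly from concavity and $f(0)=0$, your argument does not depend on resolving that discrepancy.
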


\begin{proof}
If $\mathfrak{a}_{p}\!=
\!t_1$ and $\mathfrak{a}_{q}\!=\!t_n$, then since~$(\mathfrak{a})_1^{n+l}$ is a increasing sequence, $p\!<\!q$. Let the sub-sequence of $(\mathfrak{a})_1^{n+l}$ ranging from index $p$ to $q$ be $(\mathfrak{v})_1^{m}$ where $m\geq n$. Letting $\Delta \mathfrak{v}_i=\mathfrak{v}_{i+1}-\mathfrak{v}_i$ and $\Delta \mathfrak{t}_i=\mathfrak{t}_{i+1}-\mathfrak{t}_i$, we rearrange $\Delta \mathfrak{v}_i$'s and $\Delta \mathfrak{t}_i$'s in a descending order to form the sequences $(\delta \mathfrak{v})_1^{l-1}$ and $(\delta \mathfrak{t})_1^{n-1}$. Since $\mathfrak{a}_{p}=\mathfrak{t}_1$ and $\mathfrak{a}_{q}=\mathfrak{t}_n$, we have
$$
    \SUM{i=1}{m-1}\Delta \mathfrak{v}_i = \SUM{i=1}{m-1}\delta \mathfrak{v}_i =
    \SUM{i=1}{n-1}\delta \mathfrak{t}_i = \SUM{i=1}{n-1}\Delta \mathfrak{t}_i = \mathfrak{t}_n - \mathfrak{t}_1.
$$
Because $(\mathfrak{a})_1^{n+l}=(\mathfrak{t})^{n}_1 \oplus (\mathfrak{u})_1^{l}$, then $\forall i\in \{1,\cdots, n \}$ there exists $\mathsf{S}_i\! \subset\! \{1,\cdots,m\}$ such that $\sum\nolimits_{j \in \mathsf{S}_i}\delta \mathfrak{v}_j = \delta \mathfrak{t}_i$, where $\mathsf{S}_i \cap \mathsf{S}_k =\emptyset, \,\, i \not =k $. Consequently, for $r\! \in\! \{1, \cdots,m\}$, we have $ \sum\nolimits_{i=1}^{r}\delta \mathfrak{v}_i \!=\! \sum\nolimits_{j \in \mathsf{S}}\delta \mathfrak{t}_j$ for $\mathsf{S} \!\subset\! \{1,\cdots,n\}$ and $|S|\leq r$. Since $(\delta \mathfrak{t})_1^{n-1}$ is a decreasing sequence, we can write
$$
    \sum\nolimits_{i=1}^{r}\delta \mathfrak{v}_i \leq \sum\nolimits_{i=1}^{r}\delta \mathfrak{t}_i.
$$
Thus,
$$
    f(\delta \mathfrak{t}_1)\!+\!\cdots\!+\!f(\delta \mathfrak{t}_{n-1}) \!\leq \!f(\delta \mathfrak{v}_1)\!+\!\cdots\!+\!f(\delta \mathfrak{v}_{m-1})
$$ holds as a result of Lemma \ref{lem::krm}.
Given that $$ f(\delta \mathfrak{t}_1)\!+\!\cdots\!+\!f(\delta \mathfrak{t}_{n-1})\! =\! \SUM{i=1}{n-1}f(\Delta \mathfrak{t}_i)$$ and $$ f(\delta \mathfrak{v}_1)\!+\!\cdots\!+\!f(\delta \mathfrak{v}_{m-1}) \!=\! \SUM{i=1}{m-1}f(\Delta \mathfrak{v}_i) \leq \SUM{i=1}{n+1-1}f(\Delta \mathfrak{a}_i),$$ then $$\SUM{i=1}{n-1}f(\Delta \mathfrak{t}_i) \!\leq\! \SUM{i=1}{n+1-1}f(\Delta \mathfrak{a}_i),$$ which concludes the~proof. 
\end{proof}

\begin{lem}\label{lem::auxMain2}
 For any $(\mathfrak{q})_{1}^l$, let $$g((\mathfrak{q})_1^{l})=\sum\nolimits_{i=1}^{l}f(\Delta \mathfrak{q}_i)$$
where $\Delta \mathfrak{q}_i=\mathfrak{q}_{i+1}-\mathfrak{q}_{i}$ and $f$ is a concave and increasing function with $f(0)=0$. Now, consider three increasing sequences $(\mathfrak{t})^{n}_1$ and $(\mathfrak{v})^{m}_1$ and $(\mathfrak{u})_1^{l}$ and concatenations $(\mathfrak{a})_1^{n+l}=(\mathfrak{t})^{n}_1 \oplus (\mathfrak{u})_1^{l}$ and $(\mathfrak{b})_1^{m+l}=(\mathfrak{v})^{m}_1 \oplus (\mathfrak{u})_1^{l}$ where $(\mathfrak{v})^{m}_1$ is a sub-sequence of $(\mathfrak{t})^{n}_1$, then
$$
   \big(g((\mathfrak{b})_1^{m+l})-g((\mathfrak{v})_1^{m})\big)- \big(g((\mathfrak{a})_1^{n+l})-g((\mathfrak{t})_1^{n})\big) \geq 0.
$$
\end{lem}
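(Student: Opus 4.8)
The plan is to read the claim as a \emph{diminishing-returns} statement for the gap functional $g$ and to reduce it to the single-point comparison already encapsulated in Corollary~\ref{lm::aux1}. Throughout I regard $g(\mathfrak{q})$ as the sum of $f$ over the consecutive inter-element gaps of the sorted sequence $\mathfrak{q}$, so that inserting one new time into $\mathfrak{q}$ changes $g$ only through the single gap that the new time lands in.

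First I would process the elements of $(\mathfrak{u})_1^l$ one at a time. Set $(\mathfrak{v})^{(0)}=(\mathfrak{v})_1^m$ and $(\mathfrak{v})^{(r)}=(\mathfrak{v})_1^m\oplus(\mathfrak{u})_1^r$, and likewise $(\mathfrak{t})^{(0)}=(\mathfrak{t})_1^n$, $(\mathfrak{t})^{(r)}=(\mathfrak{t})_1^n\oplus(\mathfrak{u})_1^r$, so that $(\mathfrak{v})^{(l)}=(\mathfrak{b})_1^{m+l}$ and $(\mathfrak{t})^{(l)}=(\mathfrak{a})_1^{n+l}$. Because the same points $\mathfrak{u}_1,\dots,\mathfrak{u}_r$ are inserted into both sequences and $(\mathfrak{v})_1^m$ is a sub-sequence of $(\mathfrak{t})_1^n$, the sub-sequence relation is preserved at every stage, i.e. $(\mathfrak{v})^{(r)}$ is a sub-sequence of $(\mathfrak{t})^{(r)}$. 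Telescoping then rewrites the two marginal differences as sums of single-insertion increments,
\begin{align*}
g((\mathfrak{b})_1^{m+l})-g((\mathfrak{v})_1^m)&=\sum\nolimits_{r=0}^{l-1}\big(g((\mathfrak{v})^{(r+1)})-g((\mathfrak{v})^{(r)})\big),\\
g((\mathfrak{a})_1^{n+l})-g((\mathfrak{t})_1^n)&=\sum\nolimits_{r=0}^{l-1}\big(g((\mathfrak{t})^{(r+1)})-g((\mathfrak{t})^{(r)})\big),
\end{align*}
so it suffices to show that, for each $r$, inserting $\mathfrak{u}_{r+1}$ into the coarser sequence $(\mathfrak{v})^{(r)}$ raises $g$ by at least as much as inserting it into the finer sequence $(\mathfrak{t})^{(r)}$.

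The heart of the argument is this single-insertion comparison. Since $(\mathfrak{v})^{(r)}$ is a sub-sequence of $(\mathfrak{t})^{(r)}$, the gap $[p,q]$ of $(\mathfrak{v})^{(r)}$ into which $\mathfrak{u}_{r+1}$ falls contains the gap $[p',q']\subseteq[p,q]$ of $(\mathfrak{t})^{(r)}$ into which the same point falls, with $p\le p'\le \mathfrak{u}_{r+1}\le q'\le q$. Writing both increments in split form, the $(\mathfrak{v})$-increment is $f(\mathfrak{u}_{r+1}-p)+f(q-\mathfrak{u}_{r+1})-f(q-p)$ and the $(\mathfrak{t})$-increment is $f(\mathfrak{u}_{r+1}-p')+f(q'-\mathfrak{u}_{r+1})-f(q'-p')$. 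Setting $c=\mathfrak{u}_{r+1}-p$, $d=q-\mathfrak{u}_{r+1}$, $a=\mathfrak{u}_{r+1}-p'$, $b=q'-\mathfrak{u}_{r+1}$ gives $0\le a\le c$ and $0\le b\le d$, so Corollary~\ref{lm::aux1} yields $f(a)+f(b)-f(a+b)\le f(c)+f(d)-f(c+d)$, i.e. the increment on the coarser sequence dominates. Summing these inequalities over $r$ through the telescoping identities above delivers $g((\mathfrak{b})_1^{m+l})-g((\mathfrak{v})_1^m)\ge g((\mathfrak{a})_1^{n+l})-g((\mathfrak{t})_1^n)$, which is the claim.

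The main obstacle is the degenerate geometry of the insertions: a point $\mathfrak{u}_{r+1}$ may coincide with an existing time, or may fall beyond the current maximum, in which case it does not split an interior gap and Corollary~\ref{lm::aux1} does not apply verbatim. A coincident point contributes a zero-length gap and hence a zero increment on both sides (using $f(0)=0$), so it is harmless. A point appended past the maximum adds the single term $f(\mathfrak{u}_{r+1}-\max (\mathfrak{v})^{(r)})$ on the coarse side and $f(\mathfrak{u}_{r+1}-\max (\mathfrak{t})^{(r)})$ on the fine side; since $\max (\mathfrak{v})^{(r)}\le\max (\mathfrak{t})^{(r)}$, monotonicity of $f$ already gives the desired domination, and the mixed case (appended on one side, interior on the other) follows from $f$ being increasing together with the same monotonicity estimate. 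Once these boundary cases are absorbed into the interior estimate, the telescoping together with Corollary~\ref{lm::aux1} closes the proof; the only genuine work is this uniform bookkeeping of the insertion positions.
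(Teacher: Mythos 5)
Your proof is correct and follows essentially the same route as the paper's: telescope over single insertions of the elements of $(\mathfrak{u})_1^{l}$, observe that the sub-sequence relation is preserved at each stage, and reduce each single-insertion comparison to the four-point inequality of Corollary~\ref{lm::aux1} applied to the split gaps $a\le c$, $b\le d$. If anything, you are more explicit than the paper about the degenerate insertion geometries (coincident points and points appended past the maximum), which the paper's proof passes over silently.
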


\begin{proof}
Let the sequence $(\mathfrak{u})_1^{p}$ be the first $p$ elements of $(\mathfrak{u})_1^{l}$. Then, we can form 
\begin{align*}
 \Delta S_p = \big(g((\mathfrak{v})_1^{m}\oplus (\mathfrak{u})_1^{p})
        &-g((\mathfrak{v})_1^{m}\oplus (\mathfrak{u})_1^{p-1})\big)\\
        &-\\
   \big(g((\mathfrak{t})_1^{n}\oplus (\mathfrak{u})_1^{p})
        &-g((\mathfrak{t})_1^{n}\oplus (\mathfrak{u})_1^{p-1})\big),
\end{align*}
 where $(\mathfrak{u})_1^{0}$ to be an empty sequence with no members. Since $(\mathfrak{v})^{m}_1$ is a sub-sequence of $(\mathfrak{t})^{n}_1$ and $(\mathfrak{u})_1^{p}$ having one member more over $(\mathfrak{u})_1^{p-1}$, then we have 
 \begin{align*}
 \Delta S_p = (f(\Delta S_{1})+f(\Delta S_{2})&-f(\Delta S_{1}+\Delta S_{2}))\\
   & - \\
    (f(\Delta S_{3})+f(\Delta S_{4})&-f(\Delta S_{3}+\Delta S_{4}))
 \end{align*}    
with $0\leq \Delta S_{3} \leq \Delta S_{1}$ and $0\leq \Delta S_{4} \leq \Delta S_{2}$. From Corollary~\ref{lm::aux1}, we can conclude that $\Delta S_p\geq 0$. Then, given 
$$
    \sum_{p=1}^{l} \!\Delta S_p \!\!=\!\!\big(g((\mathfrak{b})_1^{m+l})\!-\!g((\mathfrak{v})_1^{m})\big)\!\!-\! \big(g((\mathfrak{a})_1^{n+l})\!-g((\mathfrak{v})_1^{m})\big),
$$
 the proof is concluded.
\end{proof}

\end{document}